\date{\today}
\newtheorem{theorem}{Theorem}
\DeclarePairedDelimiter{\floor}{\lfloor}{\rfloor}
\newcommand{\bra}[1]{{\left\langle{#1}\right\vert}}
\newcommand{\ket}[1]{{\left\vert{#1}\right\rangle}}
\newcommand{\qw}[1][-1]{\ar @{-} [0,#1]}
\newcommand{\qwx}[1][-1]{\ar @{-} [#1,0]}
\newcommand{\gate}[1]{*+<.6em>{#1} \POS ="i","i"+UR;"i"+UL **\dir{-};"i"+DL **\dir{-};"i"+DR **\dir{-};"i"+UR **\dir{-},"i" \qw}
\newcommand{\control}{*!<0em,.025em>-=-<.2em>{\bullet}}
\newcommand{\ctrl}[1]{\control \qwx[#1] \qw}
\newcommand{\targ}{*+<.02em,.02em>{\xy ="i","i"-<.39em,0em>;"i"+<.39em,0em> **\dir{-}, "i"-<0em,.39em>;"i"+<0em,.39em> **\dir{-},"i"*\xycircle<.4em>{} \endxy} \qw}
\newcommand{\multigate}[2]{*+<1em,.9em>{\hphantom{#2}} \POS [0,0]="i",[0,0].[#1,0]="e",!C *{#2},"e"+UR;"e"+UL **\dir{-};"e"+DL **\dir{-};"e"+DR **\dir{-};"e"+UR **\dir{-},"i" \qw}
\newcommand{\ghost}[1]{*+<1em,.9em>{\hphantom{#1}} \qw}
\newcommand{\push}[1]{*{#1}}
\newcommand{\lstick}[1]{*!R!<.5em,0em>=<0em>{#1}}
\newcommand{\Qcircuit}{\xymatrix @*=<0em>}
\begin{document}

\title{Low Rank Density Matrix Evolution for Noisy Quantum Circuits}
\author{Yi-Ting Chen}
\email{yitchen@stanford.edu}

\affiliation{QC Ware Corp., Palo Alto, CA 94301}
\affiliation{Department of Applied Physics, Stanford University, Stanford, CA 94305}

\author{Collin Farquhar}
\email{cjf235@cornell.edu}

\author{Robert M. Parrish}
\email{rob.parrish@qcware.com}
\affiliation{QC Ware Corp., Palo Alto, CA 94301}

\begin{abstract}
     In this work, we present an efficient rank-compression approach for the classical simulation of Kraus decoherence channels in noisy quantum circuits. The approximation is achieved through iterative compression of the density matrix based on its leading eigenbasis during each simulation step without the need to store, manipulate, or diagonalize the full matrix. We implement this algorithm in an in-house simulator, and show that the low rank algorithm speeds up simulations by more than two orders of magnitude over an existing implementation of full rank simulator, and with negligible error in the target noise and final observables. Finally, we demonstrate the utility of the low rank method as applied to representative problems of interest by using the algorithm to speed-up noisy simulations of Grover's search algorithm and quantum chemistry solvers.
\end{abstract}

\maketitle

%%%%%%%%%%%%%%%%%%%%%%%%%%
%      Introduction      %
%%%%%%%%%%%%%%%%%%%%%%%%%%

\section{Introduction}
The scaling of quantum computers is limited by quantum decoherence \cite{Pellizzari1995,Chuang1995,Copsey2003,Ashhab2006}. State of the art quantum computers that consist of fewer than 100 qubits \cite{Arute2019,Gomes2018} are of interest for noisy intermediate-scale quantum (NISQ) applications, where qubits are used without error correction \cite{Preskill2018}. Therefore, classically simulating imperfect and noisy circuits is vital for the development and design of NISQ-era algorithms as well as characterizing errors in quantum hardware \cite{Harper2020}. Existing classical simulators have typically focused on emulating noiseless circuits. In this case, simulations of quantum circuits with more than 50 qubits has been demonstrated \cite{Pednault2017,Chen2018,Dang2019}. There are several techniques developed for speeding up simulations of certain types of circuits \cite{Bravyi2016,Jozsa2014,Vidal2003,Plesch2010,Bartlett2002} or algorithms \cite{Yoran2007,Browne2007,Shi2006,Kassal2008}. For simulation of noisy circuits, there are high performance computations developed \cite{Khammassi2017,Wei2018,Chaudhary2019,Jones2019}, as well as light-weight open source tools such as density matrix simulators in Qiskit \cite{Qiskit2019} and Cirq \cite{Cirq2019}. These simulators are based on evolving full density matrices which is prohibitively expensive for large numbers of qubits.  

In an open quantum system, the decoherence can be modeled as interactions with a large environment \cite{Breuer2007}. One can determine the properties of an open quantum system with the Monte Carlo wave function method where a system is decomposed into an ensemble of pure states that evolve individually and then are averaged \cite{Dalibard1992,Molmer1993,Bassi2008,Guerreschi2020,Abid2020}. On the other hand, the dynamics can also be characterized by the Lindblad equation \cite{Gorini1976,Lindblad1976} which well describes decoherence in various quantum hardware architectures \cite{Jelezko2004,Raitzsch2009,Fitzpatrick2017}. To reduce the complexity of the Lindblad equation, one can project the quantum states onto a lower dimensional basis using filtering theory and simulate the states more efficiently with reasonable accuracy \cite{Handel2005,Bris2013}. 

In this work, we combine the ideas of the pure state decomposition \cite{Dalibard1992,Molmer1993} and the low dimension basis projection \cite{Bris2013} to efficiently simulate noisy quantum circuits. Compressed representations are commonly applied to classical simulation of quantum systems with high symmetry or low entanglement. Applications range from compressed sensing of quantum state tomography \cite{Gross2010,Kyrillidis2018}, limiting bond dimensions of tensor networks \cite{Vidal2003,Vidal2004,White2018}, and low-rank factorization of Hamiltonians \cite{Motta2018} to efficiently represent of states \cite{Cao2010,Plesch2010,Wu2018,Wu2018-2}. In our case, it is found that the von Neumann entropy of a density matrix is often small when the noise level is low, implying that it is possible to model the density matrix using a matrix of lower rank with minimal information loss. We achieve this by iteratively projecting onto a subspace of the eigenbasis, and evolving only a small ensemble of pure states. 

In the following, we present a complete and explicit algorithm which decomposes a mixed density matrix into a low rank matrix representing an ensemble of pure states, applies gate and Kraus operators to this low rank matrix, and computes the output density matrix and probability distribution. The procedure involves iterative compression of the density matrix to maintain the most numerically compact form with minimal error. As an example, Fig. \ref{fig:demo_low_error}a shows a 6-qubits density matrix after a quantum circuit that solves a Grover's search problem for finding states with Hamming weight $\leq2$. The same circuit is simulated with depolarizing noise with noise strength $p\simeq 0.33\%$ by an exact method and by our low rank method. In this example, we use a low rank representation that has only $20\%$ of the full rank. Fig. \ref{fig:demo_low_error}b and \ref{fig:demo_low_error}c show that the low rank method simulates noise with high accuracy. More extensive benchmarking and detailed descriptions on the performance and accuracy of the method are in Section \ref{sec:implementation}. 

In fact, we show that it is possible to evolve and compute quantities of interest of a $(2^N \times 2^N)$ density matrix without ever forming a matrix of size $(2^N \times 2^N)$, where $N$ is the number of qubits. The algorithm is then assessed by a sequence of random benchmarking under various types and strengths of noise channels to test its practical speed-up and error. We show that the algorithm performs consistently in random circuits and in structured circuits for quantum algorithms such as Grover's search, with a speed-up more than two orders of magnitude and with a small error (around $0.01\%$) in the probability distribution associated with the final output density matrix. Furthermore, as $N$ becomes larger, and approaches the range for which classical simulations become difficult, the advantage of this algorithm continues to increase over the standard method of full density matrix evolution.

\begin{figure*}[]
  \includegraphics[width=1.0\textwidth]{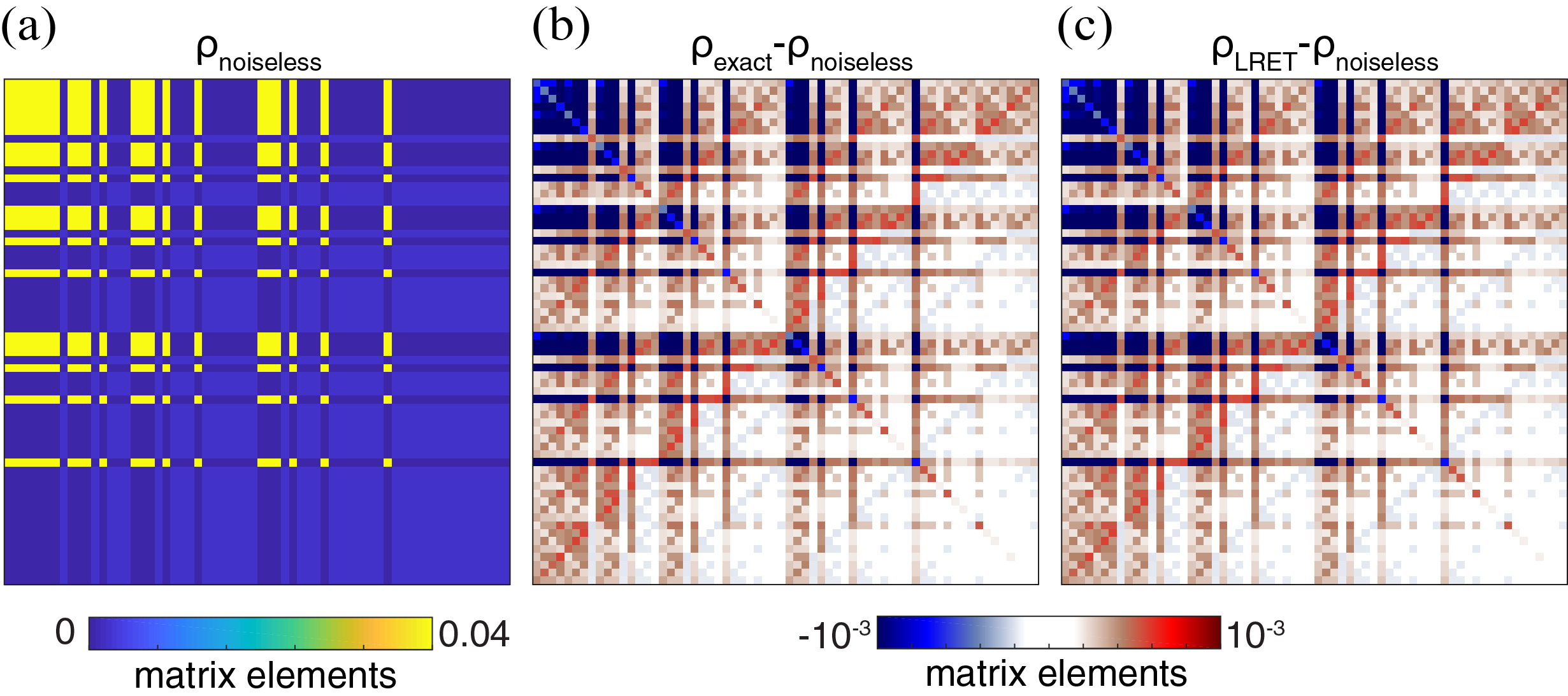}
  \caption{An example showing that the low rank method simulates noise with negligible error. (a) density matrix from noiseless simulation, $\rho_\text{noiseless}$. (b) difference of density matrices from exact noise simulation and from noiseless simulation, $\rho_\text{exact}-\rho_\text{noiseless}$. (c) difference of density matrices from low rank noise simulation and from noiseless simulation, $\rho_\text{LRET}-\rho_\text{noiseless}$ where LRET, the focus of this article, is a low rank method. The rank of $\rho_\text{LRET}$ is 20\% of that of $\rho_\text{exact}$, and corresponds to 2.2\% of distortion (defined in Section \ref{sec:implementation}). The noise strength is $p\simeq 0.33\%$ and the truncation threshold is $\epsilon=3\times 10^{-4}$.}
  \label{fig:demo_low_error}
\end{figure*}

%%%%%%%%%%%%%%%%%%%%%%%%%%
%       Algorithm        %
%%%%%%%%%%%%%%%%%%%%%%%%%%
\section{Algorithm for Low Rank Noise Simulation}
In this section, we present an algorithm that simulates noisy circuits using a low rank representation of density matrices. The algorithm consists of two parts, low rank evolution and eigenvalue truncation, which are covered in section \ref{sec:low rank evolution} and \ref{sec:eigenvalue truncation} below. In section \ref{sec:kraus operation decomposition}, an iterative procedure consisting of these two parts is introduced. Then, in section \ref{sec:probability density and measurement} we explain how to sample the associated probability distribution without explicitly forming the full density matrix.

\subsection{Low Rank Evolution} \label{sec:low rank evolution}
A coherent quantum system can be represented by either a statevector $\ket{\psi}$ or a density matrix $\rho=\ket{\psi}\bra{\psi}$. $\ket{\psi}$ has dimension $2^N\times1$, while $\rho$ has dimensions $2^N\times 2^N$. Because of the substantial difference in the sizes of $\ket{\psi}$ and $\rho$, most classical simulations of coherent quantum systems work directly with the statevector representation. Unfortunately, the statevector representation does not directly allow for the presence of decoherence. Instead, the evolution of a quantum system in a decoherent noise channel can only be described by a density matrix $\rho$, which is generally more computationally demanding to simulate. 
The evolution of a quantum state in a noisy quantum circuit is described by \cite{Kraus1983,Bacon2001,Nielsen2011}
\begin{equation} \label{eqn:full rank evolution}
\rho^{(d+1)} = \sum_{\alpha} p_{\alpha} K_{\alpha} \rho^{(d)} K_{\alpha}^{\dag}
\end{equation}
where $K_\alpha$ are Kraus matrices, $p_{\alpha}$ is the corresponding probability, $\rho^{(d)}$ and $\rho^{(d+1)}$ are density matrices before and after a noise channel. This Kraus-based noise model is capable of encapuslating many different types of decoherence channels, such as bit flip, depolarizing, etc., and therefore is an extremely useful tool in modeling the operation of NISQ-era quantum algorithms in the presence of decoherence noise on real devices. However, most current implementations of this Kraus-based decoherence model explicitly work with the $2^N \times 2^N$ noise-including density matrix. Our approach exploits the fact that for realistically small noise levels ($p\simeq0.01$), the von Neuman entropy of the density matrix usually remains low, raising the possibility of working with an approximate but accurate low-rank representation of the density matrix. This observation is supported by an entropy analysis in Appendix \ref{adx:LowRankEntropy}.

While the formal possibility of a low-entropy density matrix evolution is tantalizing, there remains to be resolved many pragmatical details about how to efficiently identify and exploit this rank structure while avoiding formation and manipulation of any density-matrix-sized quantities in the rank identification process. Here and in II.B we describe an algorithm that can accomplish this for noise-including density matrices that exhibit the desired low-rank structure. A density matrix, $\rho$, can always be decomposed as a outer product of the $L \in \mathbb{C}^{2^N\times V}$ matrix,
\begin{equation} \label{eqn:low rank decomposition}
	\rho \equiv L L^{\dag},
\end{equation}
for some $V\in \mathbb{N}$. While the choice of $L$ is not unique, in general, it is possible to find $L$ with $V$ equal to the rank of the density matrix using decomposition methods such as singular value decomposition. For density matrices with rank smaller than $2^N$, this form most compactly represents the state with the minimal column dimension. Using the decomposition in Eq. (\ref{eqn:low rank decomposition}), we can evolve the density matrix by updating $L$ without evaluating $\rho^{(d)}$ explicitly as in Eq.(\ref{eqn:full rank evolution}). For a gate operation
\begin{equation} \label{eqn:low rank evolution G}
    \begin{split}
        \rho^{(d+1)} 
        = \mathcal{G}^{(d)} \rho^{(d)}
        & = G^{(d)} \rho^{(d)} G^{(d)\dag} \\
        & = G^{(d)} L^{(d)} L^{(d)\dag} G^{(d)\dag} \\
        & = L^{(d+1)} L^{(d+1)\dag}
    \end{split}
\end{equation}
where $L^{(d+1)}\equiv G^{(d)} L^{(d)}$, $\mathcal{G}^{(d)}$ is a gate operation, and $G^{(d)}$ is its corresponding gate matrix. Likewise, for a Kraus operator,
\begin{equation} \label{eqn:low rank evolution K}
    \begin{split}
        \rho^{(d+1)}
        & = \mathcal{K}^{(d)} \rho^{(d)}
        = \sum_{\alpha=1}^A p_\alpha K_\alpha^{(d)} \rho^{(d)} K^{(d)\dag}_\alpha \\
        & = \sum_{\alpha=1}^A p_{\alpha} K_{\alpha}^{(d)} L^{(d)} L^{(d)\dag} K_{\alpha}^{(d)\dag}\\
        & = \sum_{\alpha=1}^A  J_{\alpha}^{(d+1)} J^{(d+1)\dag}_{\alpha}
        = L^{(d+1)} L^{(d+1)\dag}
    \end{split}
\end{equation}
where $J_\alpha^{(d+1)} \equiv \sqrt{p_{\alpha}} K_{\alpha}^{(d)} L^{(d)}$, $\mathcal{K}^{(d)}$ is a Kraus operator, and $K^{(d)}_\alpha$ and $p_\alpha$ are its corresponding Kraus matrices and probability factor, and $A$ is the number of Kraus matrices in the operation. $L^{(d+1)}$ is formed by concatenating $J_\alpha^{(d+1)}$ as columns, ie. $L^{(d+1)} \equiv [J^{(d+1)}_1, J^{(d+1)}_2, ..., J^{(d+1)}_A]$.  Note that, due to the concatenation, the number of columns of $L$ changes after each noise operation; each column in $L^{(d)}$ will evolve to $A$ columns in $L^{(d+1)}$. For example, if the dimension of $L^{(d)}$ is $2^N \times 3$ and $A=2$, then $L^{(d+1)}$ has dimension  $2^N \times 6$.

\subsection{Eigenvalue Truncation} \label{sec:eigenvalue truncation}
From ($d-1$)-th layer to $(d)$-th layer of a quantum circuit, the number of $J_\alpha$ vectors grows by $A$ times. For a system starting from a pure state, this number is $A^d$ at the $d$-th layer. This scaling makes tracking all $J_\alpha$ vectors computationally intractable over time if left unchecked. Furthermore, in practice, when the noise level is small, the number of columns corresponding to significant eigenvalues of $L^{(d)}$ is often found to grow only polynomially with the system size. An eigenvalue truncation procedure is used to project the density matrix into lower rank and keep only those highest contributing columns, akin to the quantum filtering in simulating open quantum system \cite{Handel2005}. We truncate those eigenvectors whose eigenvalues are negligible by 
\begin{equation}
\rho^{(d)} = U^{(d)} \Lambda^{(d)} U^{(d)\dag}
\simeq \tilde{U}^{(d)} \tilde{\Lambda}^{(d)} \tilde{U}^{(d)\dag}
\end{equation}
where $U^{(d)}$, $\Lambda^{(d)}$ are eigenvectors and eigenvalues of $\rho^{(d)}$, and from which we define $\tilde{U}^{(d)}$, $\tilde{\Lambda}^{(d)}$ as approximations for which the unimportant eigenvalues and eigenvectors are truncated. The truncation is based on a threshold $\epsilon$. The descending-ordered eigenvalues are picked up one by one until they sum to $1-\epsilon$. The remaining eigenvalues sum to $\epsilon$ are thrown away along with their associated eigenvectors. Although more sophisticated ways of truncation exist \cite{Alquier2013,Butucea2015}, we use this simple cutoff criteria to better control the error introduced by the procedure. Furthermore, this truncation method is the optimal scheme to preserve the trace and the 2-norm of a matrix, known as the Eckart-Young theorem \cite{Trefethen1997}.

The representation above might be a useful method to retain only the maximal information $L$ factors in Kraus noise models. However, the approach appears to have the computational problem that it involves the eigendecomposition of a $2^N \times 2^N$ matrix. Solving an eigenvalue problem of a $2^N\times 2^N$ matrix has a complexity of $\mathcal{O}((2^N)^3)$, which is very expensive and would overwhelm the benefit of low rank simulation. However, using the result from theorem 1 in the appendix, we can efficiently compute the eigenvectors and eigenvalues without explicitly constructing the density matrix. The complexity of the eigenvalue problem is instead $\mathcal{O}((A V)^3)$ where $V<2^N$ is the number of columns of $L$ in Eqn. (\ref{eqn:low rank decomposition}), and $A$ is the number of Kraus matrices comprising the Kraus operator.

\subsection{Kraus Operator Decomposition} \label{sec:kraus operation decomposition}
We model noisy quantum channels with single-qubit Kraus operators. To model noise induced by gate operations, one may apply Kraus operators following each gate. If the circuit is sparse in terms of gates, correspondingly, there are only a few Kraus operators per layer. In this case, $A V$ stays small and eigenvalue truncations can be done relatively efficiently. However, consider a dense noisy circuit with a depolarizing Kraus operator acting on every qubit at each time-step (Fig. \ref{fig:general circuit}). The depolarizing Kraus operator is comprised of $4$ matrices \cite{Nielsen2011}, and therefore $A = 4^N$, making eigenvalue truncation intractable with complexity $\mathcal{O}((4^N V)^3)$. This can be resolved by decomposing the Kraus operator in several groups 
\begin{equation}
    \mathcal{K}^{(d)} = \prod_{\beta=1}^B \mathcal{K}_\beta^{(d)}.
\end{equation}
This decomposition is possible because noise channels in a quantum computer can be well described by a combination of one and two-qubits Kraus operators \cite{Arute2019}. In the example in Fig. \ref{fig:general circuit}, instead of an eigenvalue truncation after each whole Kraus layer, a truncation is applied after each $\mathcal{K}_\beta^{(d)}$ in order to prevents $A$ from getting too large. In other word, the evolution $\rho^{(d+1)} \rightarrow \mathcal{K}^{(d)}\rho^{(d)}$ is approximated by 
\begin{equation}
    \rho^{(d+1)}
    =\mathcal{E}\mathcal{K}_B^{(d)}\mathcal{E}\mathcal{K}_{B-1}^{(d)}\cdots\mathcal{E}\mathcal{K}_2^{(d)}\mathcal{E}\mathcal{K}_1^{(d)}\rho^{(d)}
\end{equation}
where $\mathcal{E}$ is eigenvalue truncation operation. Because a Kraus operator is decomposed into $B$ operations, each decomposed operation has only $\bar{A}=4^{\frac NB}$ Kraus matrices, denoted as $K^{(d)}_{\beta,\alpha}$ with subscript $\alpha$ runs from $1$ to $\bar{A}$. Each truncation has complexity $\mathcal{O}((4^{\frac{N}{B}}V)^3)$ and there are $B$ of truncation steps, so the total complexity is $\mathcal{O}(B(4^{\frac{N}{B}}V)^3)$. We choose $B$ such that $M=\frac NB$ is constant and the complexity becomes $\mathcal{O}(N\frac{(4^MV)^3}{M})$. We define the ``intermediate rank'' $V_I$ as
\begin{equation} \label{eqn:intermediate rank}
    V_I^3 \equiv N\frac{(4^MV)^3}{M}.
\end{equation}
This quantity will be important to estimate the conditions for which low rank simulation is faster than full density matrix simulation in Section \ref{sec:implementation}. 

We focus on the simulation of NISQ-era circuits, which are shallow in terms of circuit depth. However, note that for very deep circuits, $V_I$ can grow larger than $2^N$. In this case, there is no benefit of doing low rank evolution, and we switch back to full density matrix evolution. The algorithm of low rank noise simulation is summarized in Algorithm 1.

\subsection{The LRET Algorithm} \label{sec:probability density and measurement}
Section \ref{sec:low rank evolution}, \ref{sec:eigenvalue truncation} and \ref{sec:kraus operation decomposition} together describe the algorithm for getting the final low rank representation, $L^{(D)}$. We refer this algorithm to as Low Rank simulation with Eigenvalue Truncation (LRET). The full procedure is summarized in Algorithm \ref{alg:low rank evolution}. The concatenation and the eigenvalue truncation in the algorithm are described in Section \ref{sec:low rank evolution} and \ref{sec:eigenvalue truncation}, respectively. Note that although we use the $\ket{000...}$ fudicial state as the initial state (as shown in Algorithm \ref{alg:low rank evolution}) throughout this article, the algorithm works with any single fiducial state or sparse linear combination of states. Also note that density matrix, probability distribution and expectation value in Algorithm \ref{alg:low rank evolution} are optional and are included as examples of user-specified outputs.

Once we have the final low rank representation, $L^{(D)}$, we can construct the density matrix using Eqn. (\ref{eqn:low rank decomposition}). However, this full density matrix quantity is rarely needed in standard practice. For example, one may want to simulate the behavior of quantum hardware where the only information we get is from measurements in a fixed computational basis which sample the probability mass function, $\text{Prob}(x)$, that is defined by the underlying density matrix. In this case, low rank simulation gains an additional speedup as the probability distribution is simply
\begin{equation}
    \text{Prob}(x) = \sum_{v=1}^V L^{(D)}_{x,v} L_{v,x}^{(D)\dag}.
\end{equation}
where subscript $x$ and $v$ run over computational basis dimension and column dimension of the matrix respectively. The measurement count for each state is then sampled from this distribution. Note that, if the goal of a circuit simulation is to observe and count the measurement outputs, a density matrix is not formed at any point of the simulation as long as the intermediate rank is smaller than $2^N$. Similarly, we can evaluate observables in low rank form using $O = \text{Tr}(\rho\mathcal{O})=\text{Tr}(LL^\dagger\mathcal{O})=\text{Tr}(L^\dagger\mathcal{O}L)$ where $O$ is the expectation value of an observable $\mathcal{O}$.

\begin{minipage}{.9\linewidth}
    \begin{algorithm}[H] 
        \begin{algorithmic}
            \State{$L^{(0)} = [1,0,0,...,0]$}
            \For{$d\gets 1$ to $D$}
                \State{$L^{(d)} \gets G^{(d)} L^{(d-1)}$}
                \For{$\beta\gets 1$ to $B$}
                    \State{$L^{(d)} \gets \text{Concatenate}_{\alpha} (\sqrt{p^{(d)}_{\alpha}} K_{\beta,\alpha}^{(d)} L^{(d)}$})
                    \State{$L^{(d)} \gets \text{Eigenvalue Truncation}_\epsilon(L^{(d)})$}
                \EndFor
            \EndFor \\
            \textbf{Compute quantities of user's choice}:
                \State{Density Matrix $\rho^{(D)} \gets L^{(D)} L^{(D)\dag}$}
                \State{$\text{Prob}(x) \gets \sum_{v} L^{(D)}_{x,v} L^{(D)\dag}_{v,x}$}
                \State{$\text{Expectation} \gets \text{Tr}(L^{(D)\dagger} \mathcal{O}L^{(D)})$}
        \caption{Low Rank Simulation with Eigenvalue Truncation}
        \label{alg:low rank evolution}
        \end{algorithmic}
    \end{algorithm}
\end{minipage}

%%%%%%%%%%%%%%%%%%%%%%%%%%
%     Implementation     %
%%%%%%%%%%%%%%%%%%%%%%%%%%

\section{Implementation and Benchmarking} \label{sec:implementation}
We implemented the algorithm for low rank noise simulation in an in-house quantum circuit simulator built in Python. In our simulator, one can specify one of two options for a noisy simulation: full density matrix simulation (FDM), and low rank simulation with eigenvalue truncation (LRET) as described in Algorithm \ref{alg:low rank evolution}. We benchmark the two simulation methods in three scenarios: randomized benchmarking, state preparation for quantum chemistry and Grover's search algorithm. For time benchmarking, we use Cirq 0.5.0, a widely-used open source FDM simulator, to show that our implementation of FDM method is reasonably optimized and serves as a good baseline for comparison. All benchmarking are executed on an AWS c5.12xlarge instance. 

The general result is that the LRET method is two orders of magnitude faster than the FDM method with a trade-off of $\sim0.01\%$ error. The error is measured by the distance between the output density matrices from the LRET method ($\rho_\text{LRET}$) and from an exact method ($\rho_\text{exact}$), such as FDM. Because this quantity depends on the noise level, we define a more appropriate measure for error benchmarking
\begin{equation}
    \text{distortion}
    \equiv \frac{T(\rho_\text{LRET},\rho_\text{exact})}{T(\rho_\text{exact},\rho_\text{noiseless})}
\end{equation}
where $\rho_\text{noiseless}$ is the density matrix from the simulation of the same circuit without noise, and $T$ is the variational distance \cite{Crooks2015} between the probability distributions defined by the two density matrices in the computational basis, ie. $T(\rho^A, \rho^B)=\sum_i |\rho^A_{ii}-\rho^B_{ii}|$. To aid in a qualitative understating of the distortion measure, it is useful to note that the distance between $\rho_\text{LRET}$ and $\rho_\text{exact}$ captures the error or information loss incurred by the eigenvalue truncation procedure. The denominator scales this value relative to the change induced by the noise channel to $\rho_\text{noiseless}$. For example, when the output error is $0.01\%$ and the change induced by noise is $0.1\%$, the distortion is $10\%$.

\subsection{Randomized Benchmarking}
Randomized benchmarking is a standard tool used to evaluate the performance of quantum hardware \cite{Emerson2005,Knill2008,Onorati2019}. We use the idea to benchmark time and error metrics for the two simulation methods on an ensemble of randomly generated circuits. The circuits are generated from random choices of common gates, including X, Y, Z, S, T, RX, RY, RZ, SWAP, CZ and CNOT. The Section \ref{sec:benchmark circuit type} below discusses the different types of circuits we use for benchmarking. If not explicitly stated otherwise, the random circuits are dense circuits where 1-qubit and 2-qubit gates appear with equal probability in $\mathcal{G}$ (Fig. \ref{fig:general circuit}), and the 2-qubit gates connect to adjacent qubits. Dense circuits are those for which a gate acts on each qubit at each time-step. Inspired by the fact that noise is well described by a set of Kraus operators whose dimension does not scale with circuit size \cite{Arute2019}, all Kraus operators act on one qubit in the benchmarking, as illustrated in Fig. \ref{fig:general circuit}.

To understand the conditions for which the LRET method gains speed-up against FDM method, we also inspect the rank evolution of density matrix in LRET. The rank and the intermediate rank (as defined in Eq. (\ref{eqn:intermediate rank})) directly influence the computational complexity and the speed of the LRET algorithm. In the following sections, we first benchmark time, error, and rank of simulations under different noise channels. Then, we assess performance on a variety of differently characterized random circuits.

\begin{figure}[]
    \centering
    \makebox[0pt]{
        \Qcircuit @C=1.2em @R=0.2em {
         \lstick{\ket{0}} & \multigate{4}{\mathcal{G}^{(1)}} & \gate{\mathcal{K}_1} 
                          & \multigate{4}{\mathcal{G}^{(2)}} & \gate{\mathcal{K}_1} 
                          & \multigate{4}{\mathcal{G}^{(3)}} & \gate{\mathcal{K}_1} & \push{\quad ...\quad}\qw\\
         \lstick{\ket{0}} & \ghost{\mathcal{G}^{(1)}} & \gate{\mathcal{K}_2} 
                          & \ghost{\mathcal{G}^{(2)}} & \gate{\mathcal{K}_2} 
                          & \ghost{\mathcal{G}^{(3)}} & \gate{\mathcal{K}_2} & \push{\quad ...\quad}\qw\\
         \lstick{\ket{0}} & \ghost{\mathcal{G}^{(1)}} & \gate{\mathcal{K}_3} 
                          & \ghost{\mathcal{G}^{(2)}} & \gate{\mathcal{K}_3} 
                          & \ghost{\mathcal{G}^{(3)}} & \gate{\mathcal{K}_3} & \push{\quad ...\quad}\qw\\
         \lstick{\ket{0}} & \ghost{\mathcal{G}^{(1)}} & \gate{\mathcal{K}_4} 
                          & \ghost{\mathcal{G}^{(2)}} & \gate{\mathcal{K}_4} 
                          & \ghost{\mathcal{G}^{(3)}} & \gate{\mathcal{K}_4} & \push{\quad ...\quad}\qw\\
         \lstick{\ket{0}} & \ghost{\mathcal{G}^{(1)}} & \gate{\mathcal{K}_5} 
                          & \ghost{\mathcal{G}^{(2)}} & \gate{\mathcal{K}_5} 
                          & \ghost{\mathcal{G}^{(3)}} & \gate{\mathcal{K}_5} & \push{\quad ...\quad}\qw\\
         }
     }
    \caption{Schematic of a noisy quantum circuit. Here $\mathcal{G}^{(i)}$ represents gate operations at the $i$-th layer and $\mathcal{K}$ represents a one-qubit Kraus operator that models the noise.}
    \label{fig:general circuit}
\end{figure}
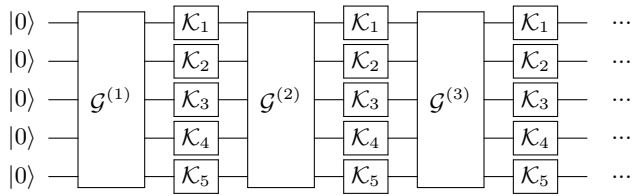

\subsubsection{Depolarizing Noise Channel} \label{sec:benchmark dep}
In this section, we benchmark dense circuits with the depolarizing noise channel, which is defined as $\rho \rightarrow (1-p) \rho + \frac p3 X \rho X + \frac p3 Y \rho Y + \frac p3 Z \rho Z$. It has been shown that noise in quantum hardware, in average, behaves like depolarizing noise \cite{Emerson2002,Weinstein2004}, making it a good description of realistic noise channels.

Fig. \ref{fig:benchmark time (dep)}a shows that, while our FDM method and Cirq take a similar amount of time to run for 12-qubits circuits with $p=0.1\%$ under depolarizing noise, the LRET method is much faster than both. In shallow circuits, LRET is $200\times$ faster than FDM (Fig. \ref{fig:benchmark time (dep)}b). Even for higher depth circuits, LRET remains roughly $100\times$ faster. This can be understood by considering the size of the numerical representation these methods are keeping track of. While the FDM method evolves a $2^N\times2^N$ density matrix, LRET only keeps track of a $2^N\times V$ representation of a density matrix. Effective use of the LRET algorithm amounts to choosing the truncation threshold as to best manage the trade-off between the speed of the simulation and the error in the simulation results. This trade-off is characterized in Fig. \ref{fig:benchmark error (dep)}.

Although $V$ is always smaller than $2^N$ at low depth for $N>2$ (Fig. \ref{fig:benchmark time (dep)}c), the conditions for a speed-up is determined by the intermediate rank $V_I$ defined in Eq. (\ref{eqn:intermediate rank}); the LRET method is faster if $V_I< 2^N$. As shown in Fig. \ref{fig:benchmark time (dep)}d, a speed-up is only achieved for $N>7$ for the circuit depth consider herein. Since $V_I$ increases approximately polynomially and $2^N$ increases exponentially in $N$, the range of depths for which LRET has an advantage will increase even more as the number of qubits increases. Critically, LRET has an advantage precisely in the range where classical simulations begin to become burdensome. Furthermore, the space of circuit sizes in which LRET provides a significant advantage also characterizes the circuits of the early NISQ area, with few tens of qubits, circuit depth and with noise strength $p<0.01$ \cite{Preskill2018}. 

\begin{figure}[]
  \includegraphics[width=1.0\linewidth]{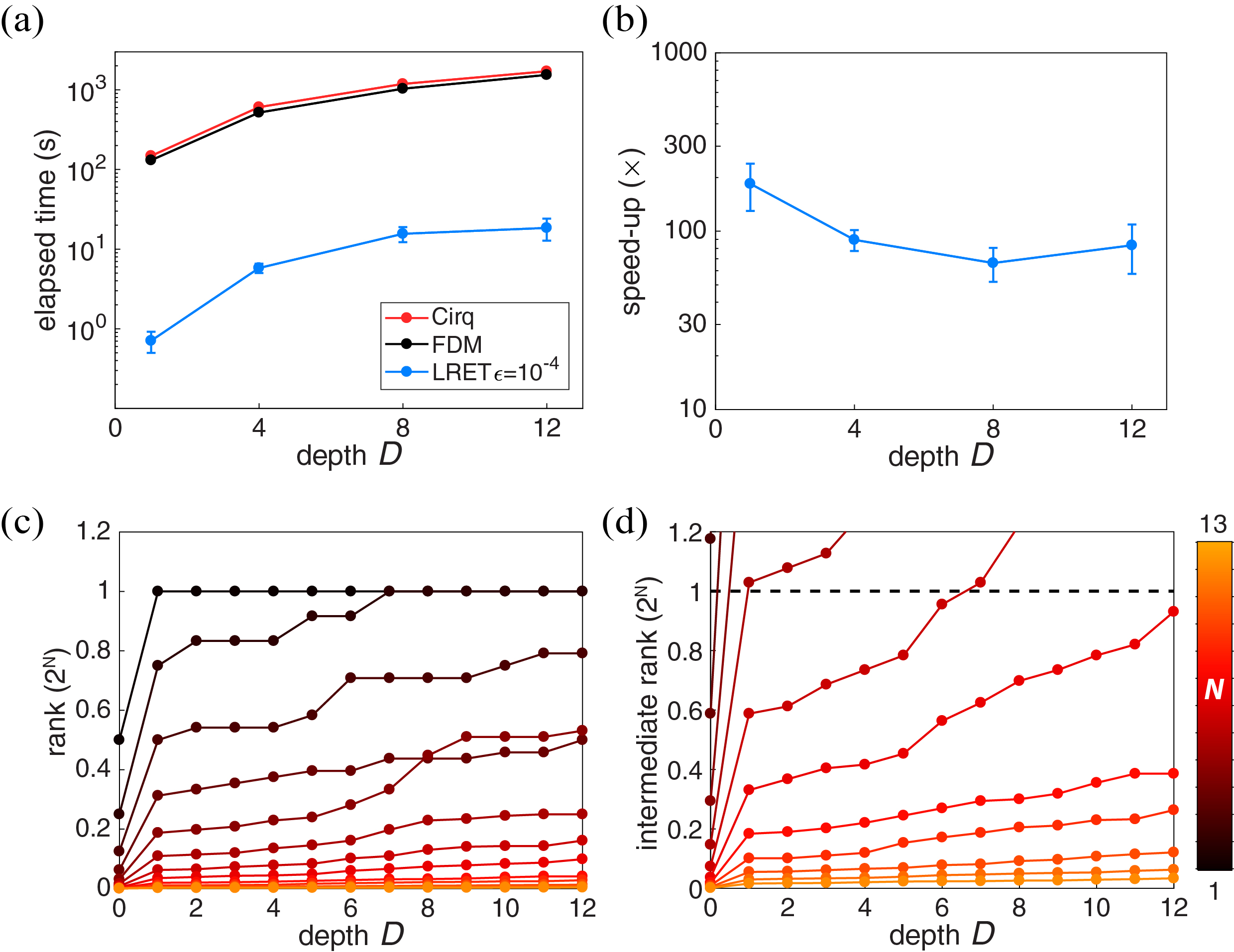}
  \caption{Time benchmarking under depolarizing noise ($p=0.1\%$, $\epsilon=10^{-4}$). (a) Averaged elapsed time for quantum circuits with $N=13$ qubits using different simulators. (b) Speed-up, defined by the ratio of the elapsed time between FDM and LRET. (c) Rank evolution for different size of quantum circuit. (d) Intermediate rank for different size of quantum circuit.}
  \label{fig:benchmark time (dep)}
\end{figure}

The LRET method gains a speed-up by truncating the negligible components of a density matrix. Although the truncation in each step is small, over time the discrepancy from the exact methods, like FDM, can build up. Here, we benchmark the error introduced by the eigenvalue truncation in the LRET method. 

Fig. \ref{fig:benchmark error (dep)}a shows the distortion as a function of the number of qubits ($N$), depth of the circuit ($D$) and eigenvalue truncation threshold ($\epsilon$). While the distortion depends on $N$ and $D$, $\epsilon$ is the strongest factor. There is a general trend that the error starts to increase rapidly at $\epsilon \simeq 10^{-4}$, so we take $\epsilon = 10^{-4}$ as a reasonable choice. From Fig. \ref{fig:benchmark error (dep)}b, we can see that error $<8\%$ for all the $N$ and $D$ considered herein. As we slice out the number of qubits and circuit depths axes in  Fig. \ref{fig:benchmark error (dep)}b, we can see that the error grows roughly linearly with $N$ and $D$ (Fig. \ref{fig:benchmark error (dep)}c and d). 

\begin{figure}[]
  \includegraphics[width=1.0\linewidth]{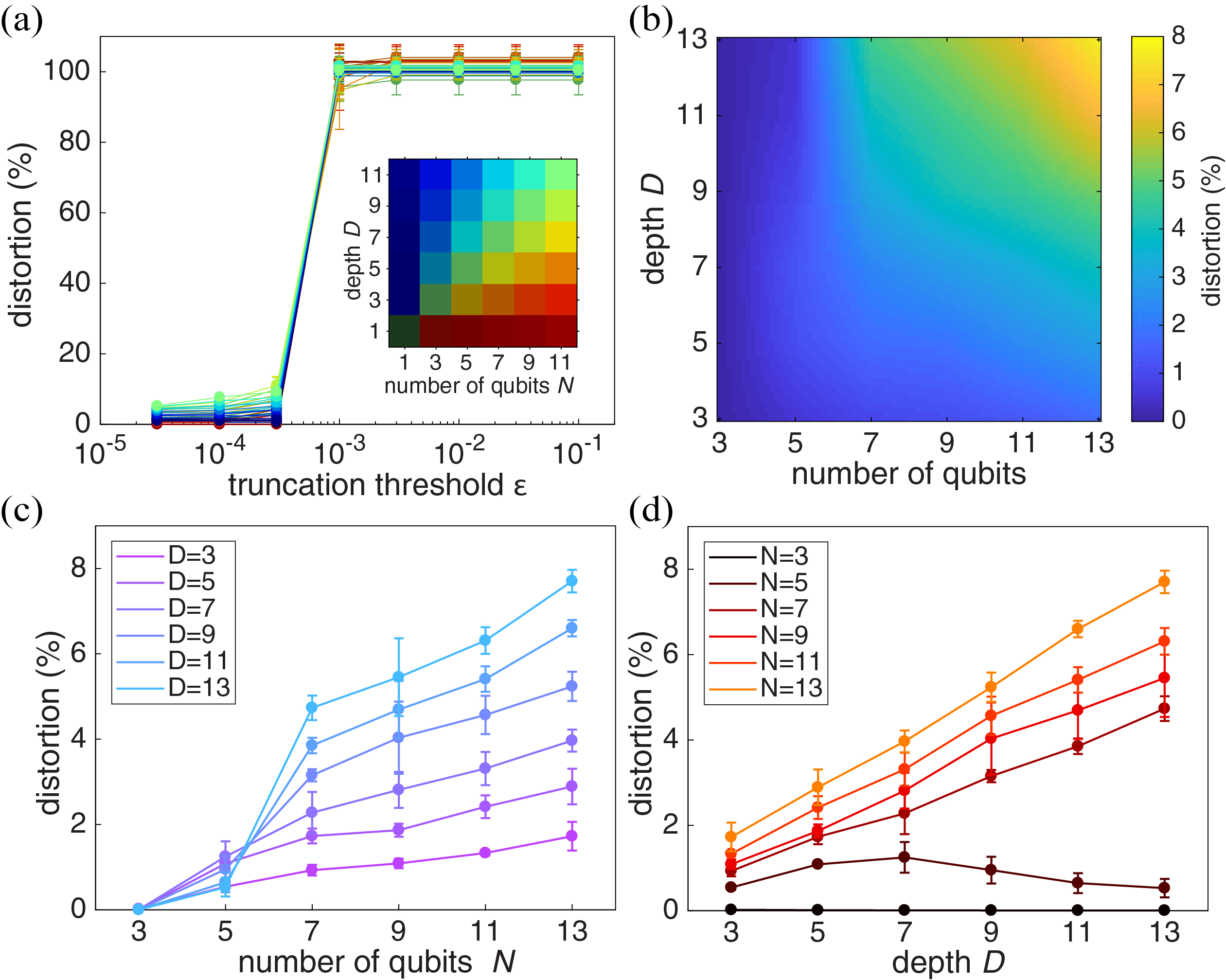}
  \caption{Error benchmarking under depolarizing noise ($p=0.1\%$). (a) Distortion as a function of number of qubits $N$, circuit depth $D$, and $\epsilon$. (Inset) The colormap for the curves, indicating their circuit size. (b) Distortion as a function of $N$ and $D$ with $\epsilon=10^{-4}$. (c) Horizontal line cut of b. (d) Horizontal line cut of b.}
  \label{fig:benchmark error (dep)}
\end{figure}

\subsubsection{Noise Strength} \label{sec:benchmark noise strength}
We now see how the noise strength affects the performance of the LRET method. All benchmarking in this section uses dense circuits with depolarizing noise channels of various strengths. In Fig. \ref{fig:benchmark gamma (dep)}a, the speed-up of the LRET method against the FDM method degrades as the noise strength grows. From $p=0.1\%$ to $p=1\%$, the speed-up drops from the order of $100\times$ to $10 \times$ at $D=12$. This degradation is due to the higher order terms in noise which scale super-linearly in $p$. While the truncation threshold $\epsilon$ is adapted linearly by fixing the ratio $\epsilon/p=0.1$ in this benchmarking, more higher order terms need to be included to meet the truncation threshold. This results in a larger $V$ and $V_I$ (Fig. \ref{fig:benchmark gamma (dep)}b), and thus a longer computational time.

Fig. \ref{fig:benchmark gamma (dep)}c shows that the distortion as a function of $\epsilon$ has a universal shape regardless of the circuit size ($N\times D$) and/or noise strength $p$. The magnitude of the distortion is relatively insensitive to circuit size. The noise strength $p$ proportionally shifts the curves in $\epsilon$ axis (ie. when $p$ and $\epsilon$ are scaled by a same factor, the error stays in a similar range).

\begin{figure}[]
  \centering
      \includegraphics[width=1.0\linewidth]{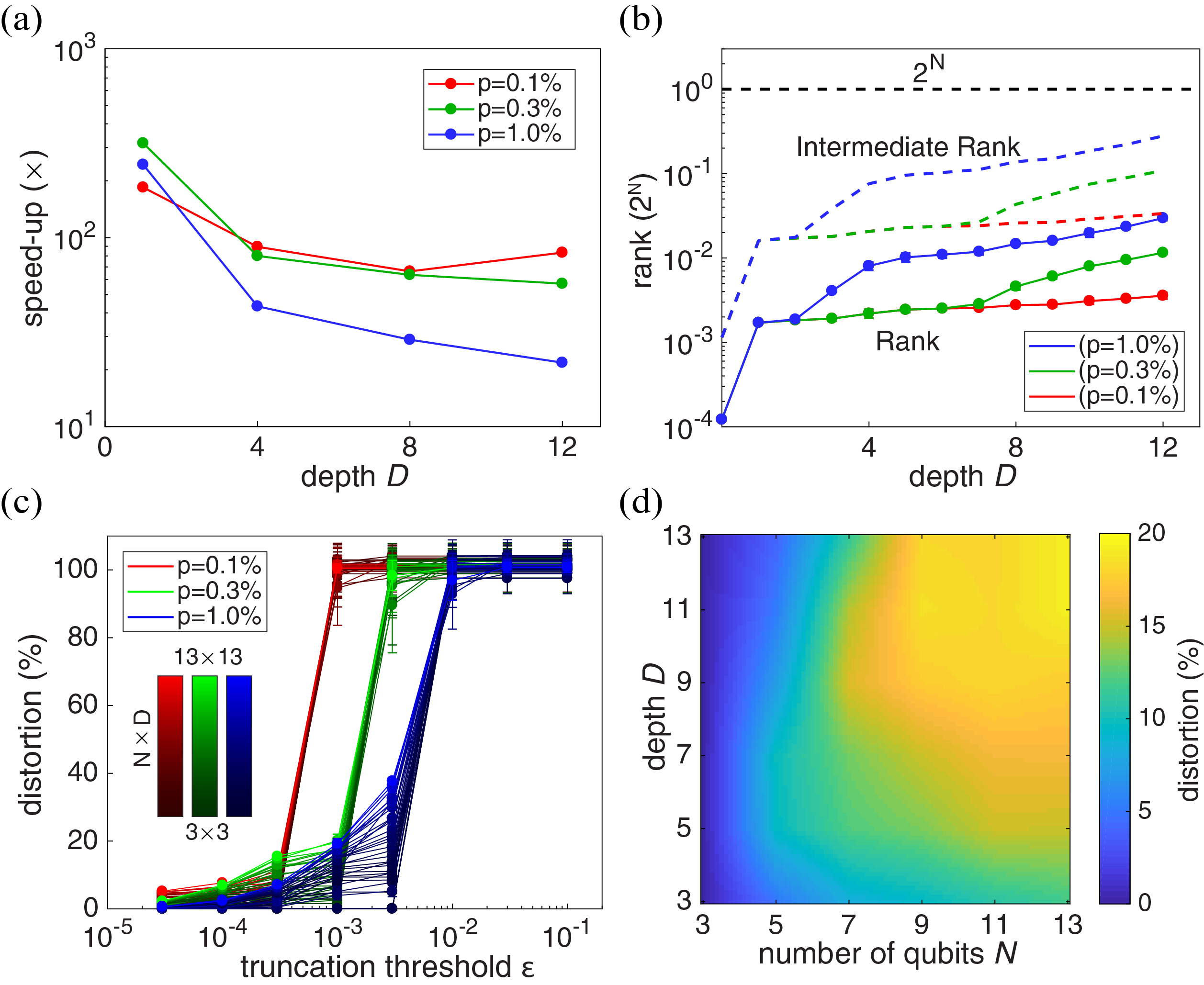}
  \caption{Noise strength benchmarking under depolarizing noise. (a) Distortion as a function of circuit size and eigenvalue truncation threshold $\epsilon$ for noise strength $p=0.1\%$, $p=0.3\%$ and $p=1.0\%$. (b) Distortion for $p=1\%$ and $\epsilon=10^{-3}$ as a function of $N$ and $D$. (c) Comparison of elapsed time of the LRET method for under the three noise strengths ($N=13$). (d) Comparison of $V$ and $V_I$ for LRET method for the three noise strengths ($N=13$). In (b), (c) and (d), the ratio $\epsilon/p$ stays constant.}
  \label{fig:benchmark gamma (dep)}
\end{figure}

\subsubsection{Other Noise Channels} \label{sec:benchmark noise type}

We now consider noise simulations under bit flip and amplitude damping channels for dense circuits with $p=0.1\%$ and $\epsilon=10^{-4}$. Bit flip can be represented by $\rho \rightarrow (1-p) \rho + p X \rho X$ in the operator-sum formalism. In other words, this channel takes a portion of the quantum state and project it uniformly in the $X$ direction of the Hilbert space. Bit flip channel is a special case of anisotropic noises. The results for the bit flip channel generalizes to other types of anisotropic noise in randomized benchmarking, such as phase flip and all other channels described by $\rho \rightarrow (1-p) \rho + p U \rho U^{\dag}$ where $U$ is any $2\times2$ unitary matrix. 
The amplitude damping channel dissipates the energy of a qubit towards its lower energy basis, usually denoted as the $\ket{0}$. We use the operator-sum formalism $\rho \rightarrow E_0 \rho E_0 + E_1 \rho E_1$, where $E_0$ and $E_1$ are Kraus operators for amplitude damping, defined as $E_0 = 
	\begin{bmatrix}
	1 & 0 \\
	0 & \sqrt{1-p}
	\end{bmatrix}$
and 
$E_1 = 
	\begin{bmatrix}
	0 & \sqrt{p} \\
	0 & 0
	\end{bmatrix}.$

Fig. \ref{fig:benchmark time (noise type)}a shows that LRET is significantly faster than FDM for all noise types. Especially in amplitude damping channel, where LRET completes in less than 3 seconds while FDM takes more than 25 minutes at $\text{depth}=12$. The speed-up of the depolarizing and bit flip channels are about $100\times$ faster while the speed-up for amplitude damping is almost $1000\times$ faster (Fig. \ref{fig:benchmark time (noise type)}b). This is related to the slower increase of intermediate rank (Fig. \ref{fig:benchmark time (noise type)}c) due to the fact that amplitude damping has a preferred state, the $\ket{0}$ state, regardless of the details of the qubit state.  

\begin{figure}[]
      \includegraphics[width=1.0\linewidth]{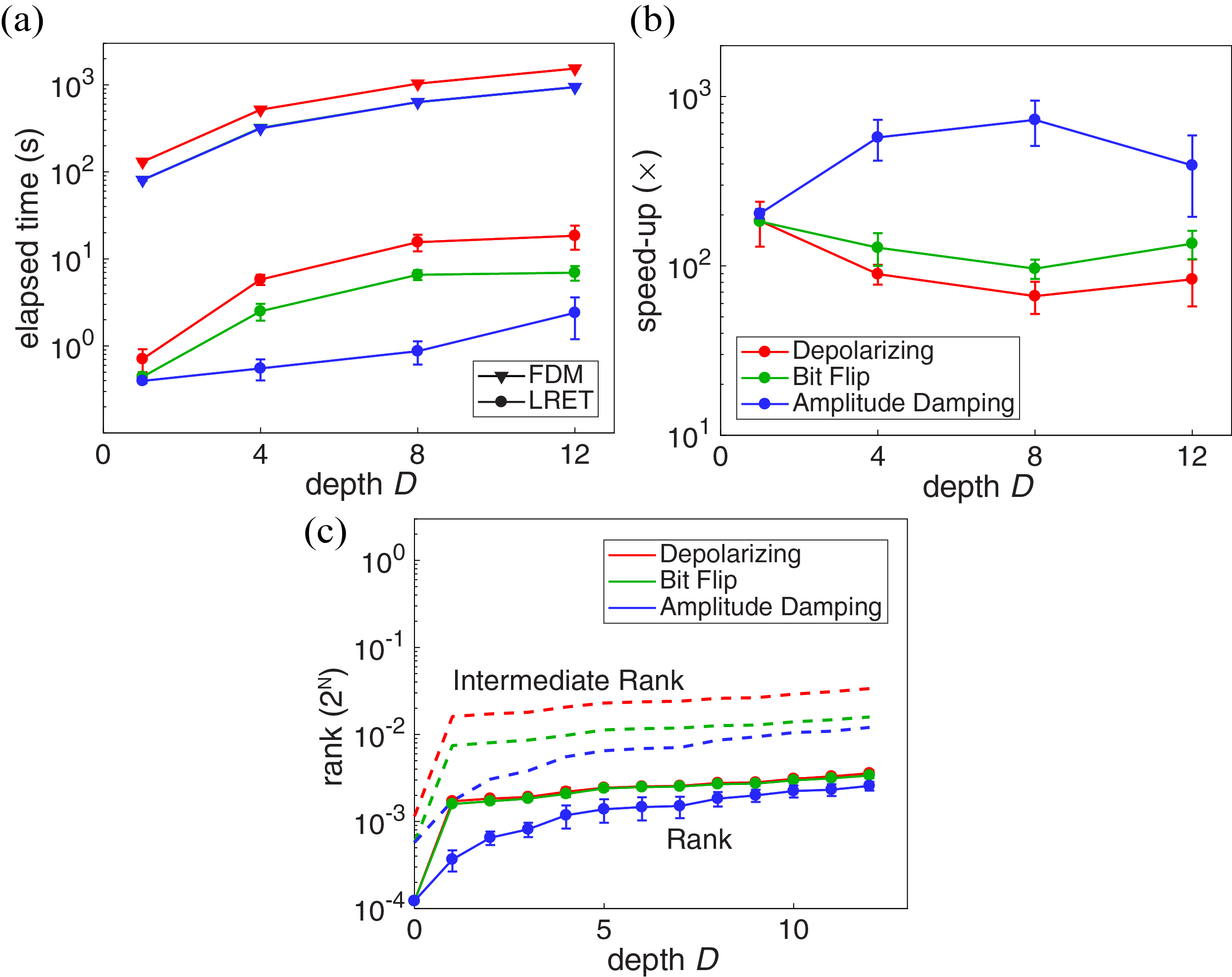}
  \caption{Noise type benchmarking: Time ($p=0.1\%$, $N=13$, $\epsilon=10^{-4}$). (a) Averaged elapsed time using FDM and LRET. The color code is the same as b. (b) The speed-up, defined by the ratio of elapsed time between FDM and LRET. (c) The evolution of rank (solid line) and intermediate rank (dashed line) for different $D$.}
  \label{fig:benchmark time (noise type)}
\end{figure}

The error benchmarking for the bit flip channel (Fig. \ref{fig:benchmark error (noise type)})a-d is very similar to that of depolarizing channel, except that bit flip is more tolerant to $\epsilon$ when the circuit size is small. At $\epsilon=10^{-3}$, the distortion is $\sim50\%$ in bit flip channel while it is $\sim100\%$ in depolarizing channel. When $\epsilon=10^{-4}$, distortion is reasonably small for all $N$ and $D$ considered herein, so we take $10^{-4}$ as a recommended choice of $\epsilon$.

In contrast to depolarizing and bit flip channels, under amplitude damping the distortion saturates at lower values when $\epsilon$ is large (Fig. \ref{fig:benchmark error (noise type)}e). This is possibly because amplitude damping prefers the ground state, favoring the LRET method which keeps only few important components of a quantum state. When $\epsilon=10^{-4}$, the distortion is smaller than $4\%$ for all circuits considered in Fig. \ref{fig:benchmark error (noise type)}f. The distortion grows slowly but linearly with circuit depth (Fig. \ref{fig:benchmark error (noise type)}g and h).

\begin{figure}[]
  \centering
      \includegraphics[width=1.0\linewidth]{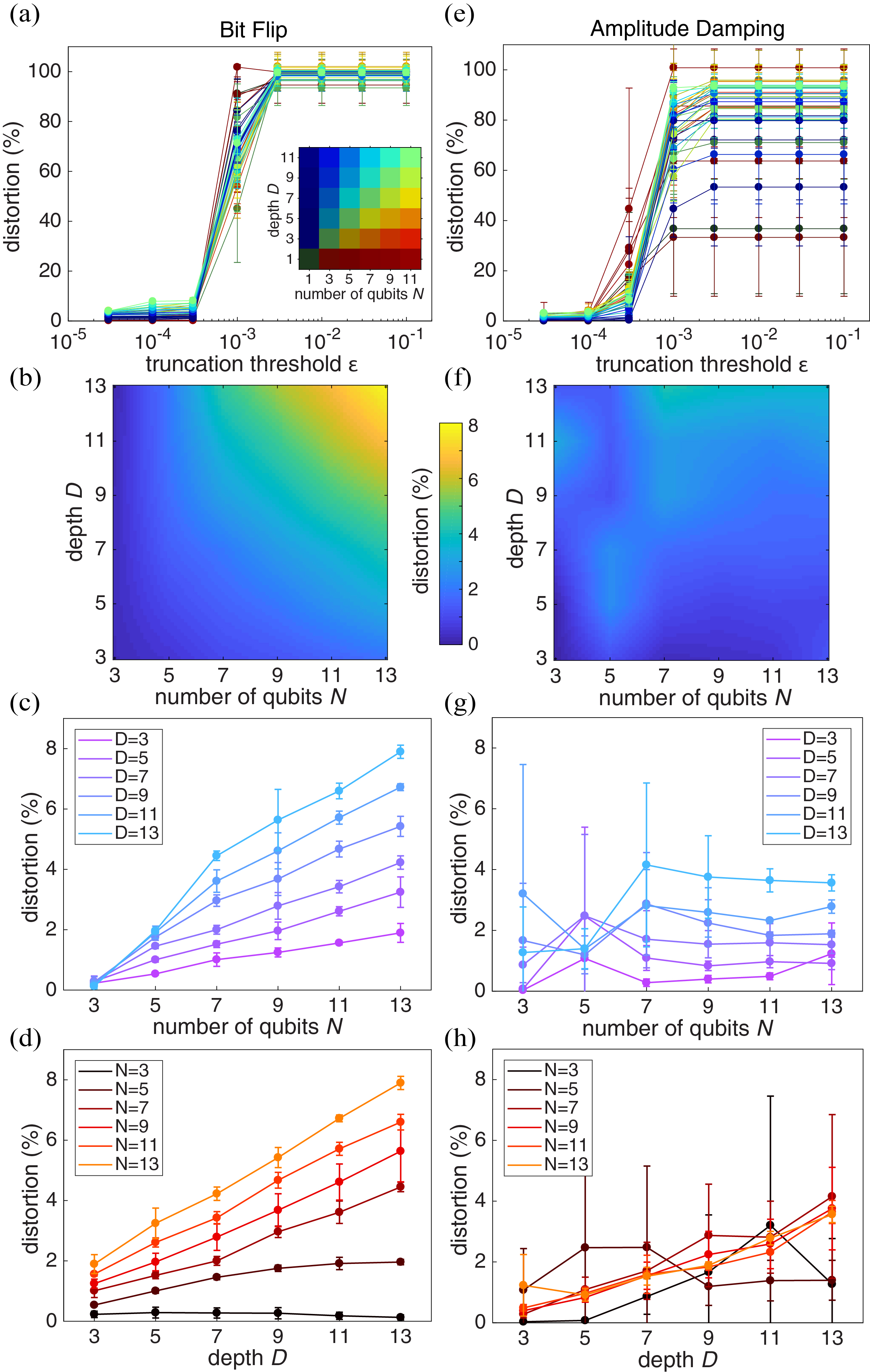}
    \caption{Error benchmarking under bit flip (left column) and amplitude damping (right column) ($p=0.1\%$). (a,e) Distortion as a function of the number of qubits, circuit depth, and $\epsilon$. (Inset) The colormap for the curves, indicating their their sizes of quantum circuit. (b,f) Distortion as a function of qubit number, circuit depth with $\epsilon=\times10^{-4}$. (c,g) Horizontal line cut of b and f. (d,h) Vertical line cut of b and f.}
  \label{fig:benchmark error (noise type)}
\end{figure}

\subsubsection{Sparsity and Connectivity of Quantum Circuits} \label{sec:benchmark circuit type}
A quantum circuit can be characterized by its sparsity and connectivity of gates. In all of the above benchmarking, the random circuits are dense, i.e. for all time steps a gate acts on each qubit (e.g. the circuit in Fig. \ref{fig:general circuit}a), and the connections are local, which means that two-qubit gates only connect adjacent qubits. Below, we consider other types of circuits. The first is dense and global, and the second is sparse and local. In sparse circuits, each qubit does not always interact with a gate at every time step and Kraus operators are only inserted after gates (i.e. the noise is as sparse as the gates). In a globally connected circuit, two-qubit gates can connect any pair of qubits in a circuit. In this section we use the bit flip channel for benchmarking the LRET method on all the aforementioned circuit-types with $p=0.1\%$ and $\epsilon=10^{-4}$. 

In terms of time-cost, simulating different circuit types goes from harder to easier as: dense-global $\rightarrow$ dense-local $\rightarrow$ sparse-local. In Figure \ref{fig:benchmark circuit type}a one can see that the LRET method retains it's speed-up for all circuit types.  The time difference between the sparse and the dense is because there are about twice as many gates in dense circuits than in sparse circuits. The time difference between the global and the local is because the set of all fixed-depth globally connected circuits spans a larger Hilbert space. Therefore, the rank of dense-global circuits grows slightly faster (Fig. \ref{fig:benchmark circuit type}c) and the simulations are slightly slower (Fig. \ref{fig:benchmark circuit type}a). Interestingly, in Fig. \ref{fig:benchmark circuit type}a one can observe that in FDM simulation, due to the number of gates and the memory allocation, the time-complexity grows at a similar rate as that of LRET with increasing circuit depth. Thus, speed-ups are consistently $\sim 100 \times$ faster with LRET than with FDM (Fig. \ref{fig:benchmark circuit type}b). 

\begin{figure}[]
  \centering
      \includegraphics[width=1.0\linewidth]{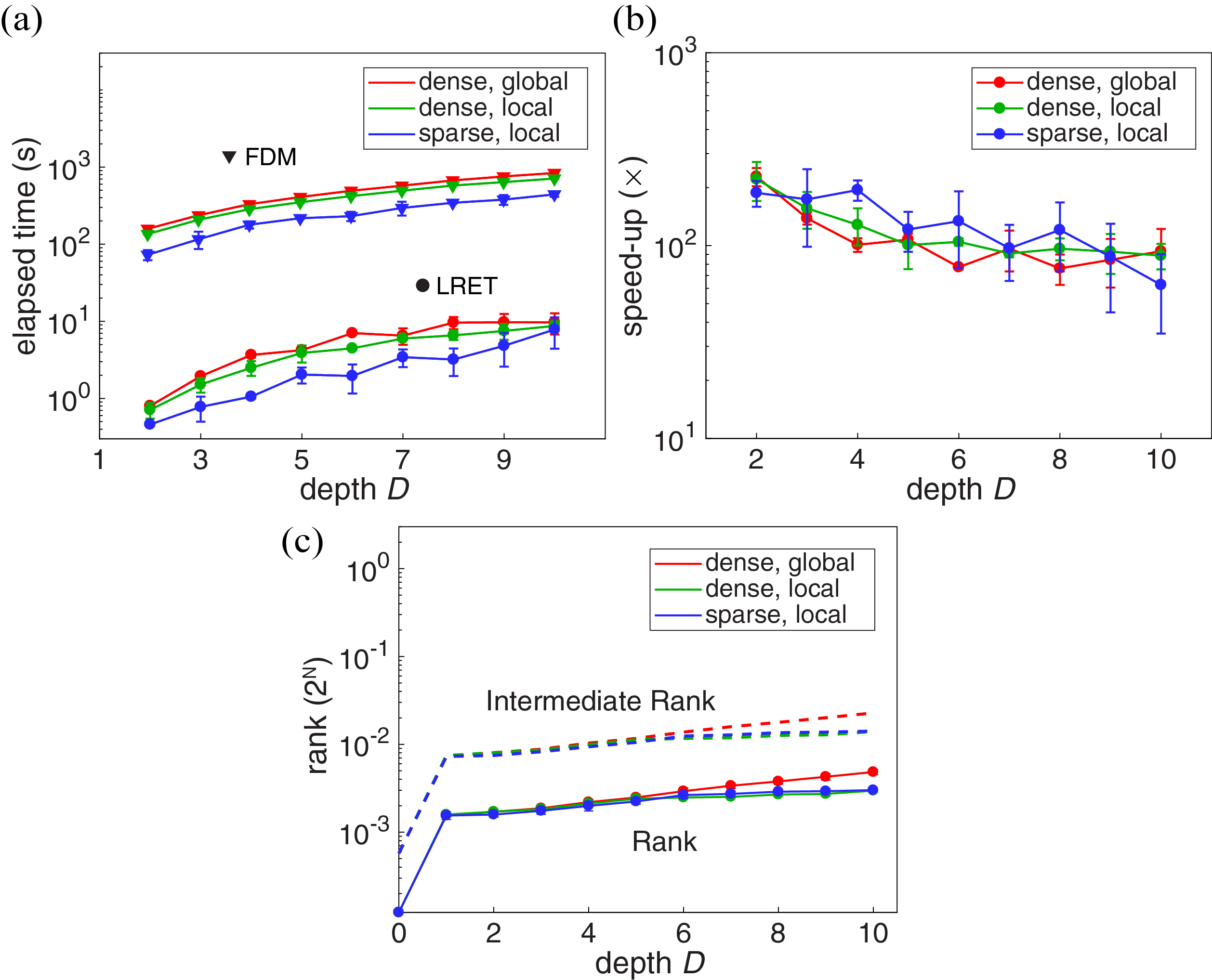}
  \caption{Time benchmarking for different circuit-types under bit flip noise ($p=0.1\%$, $N=13$, $\epsilon=10^{-4}$). The distortion values for each of the circuits simulated using the LRET method are similar and are all less than $9.2\%$. (a) Averaged elapsed time for quantum circuits with different gate sparsity and connectivity. For the sparse circuit, the sparsity is $50\%$. (b) The speed-up of LRET for different circuit-types. (c) The evolution of rank (circle) and intermediate rank (dashed line) of density matrix for different circuit-types.}
  \label{fig:benchmark circuit type}
\end{figure}

\subsection{State Preparation for Quantum Chemistry} \label{sec:benchmark MCVQE}
Quantum simulation is one of the most promising application areas for NISQ devices \cite{Preskill2018}. Algorithms have been developed to solve optimization and physical problems through quantum simulation approaches \cite{Abrams1997,Peruzzo2014,Farhi2014}. Here, we use our low rank noise simulator to run a circuit that generates generalized-amplitude W states \cite{Diker2016} and Dicke states \cite{Bartschi2019}. Although states of this kind are not hard to simulate classically, they are commonly used as subroutines for quantum information processing \cite{Murao1999,Prevedel2009,Pezze2018,Parrish2019,Hadfield2019} and thus are of high interest for simulations. Ordinarily the parameters of the circuit are initialized according to the solution of a configuration interaction singles chemistry problem, but for benchmarking purposes we set the parameters randomly.

Unlike most of the circuits used in randomized benchmarking above, the circuit in Fig. \ref{fig:benchmark chem (dep)}a is sparse. Two ways to model noise channels are (1) placing Kraus operators only after each gate, and (2) after each qubit at every time-step. We call the former sparse noise, and the later dense noise. We use a depolarizing noise channel with $p=0.1\%$ and $\epsilon=10^{-4}$. Figs. \ref{fig:benchmark chem (dep)}b and d show that the LRET method has at least a $10\times$ speed-up, and more than $100\times$ when $N$ is larger. The distortion caused by LRET is about $5\%$ in sparse noise and $15\%$ in dense noise (Fig. \ref{fig:benchmark chem (dep)}c and e). In the 13-qubit circuits, the rank of the final density matrix in LRET is $0.4\%$ and $1\%$ of the full rank in the sparse and dense noise cases, respectively. This is because the rank of a density matrix in a circuit model with dense noise increases faster, and the higher order terms thrown out by eigenvalue truncation become more important. The quantum states produced by this state preparation are highly entangled. The fact that the low rank method gains an order of magnitude speed-up demonstrates its utility when applied to practical algorithms. 

\begin{figure}[]
    \centering
    \subfloat[]{
    \Qcircuit @C=1em @R=0em{
        &\gate{R_y(\theta_1)}&\ctrl{1}            &\targ    &\qw                 &\qw      &\qw                 &\qw      &\qw\\
        &\qw                 &\gate{F_y(\theta_2)}&\ctrl{-1}&\ctrl{1}            &\targ    &\qw                 &\qw      &\qw\\
        &\qw                 &\qw                 &\qw      &\gate{F_y(\theta_3)}&\ctrl{-1}&\ctrl{1}            &\targ    &\qw\\
        &\qw                 &\qw                 &\qw      &\qw                 &\qw      &\gate{F_y(\theta_4)}&\ctrl{-1}&\qw\\
     }} \\
     \subfloat{ \label{fig:}
     \includegraphics[width=1.0\linewidth]{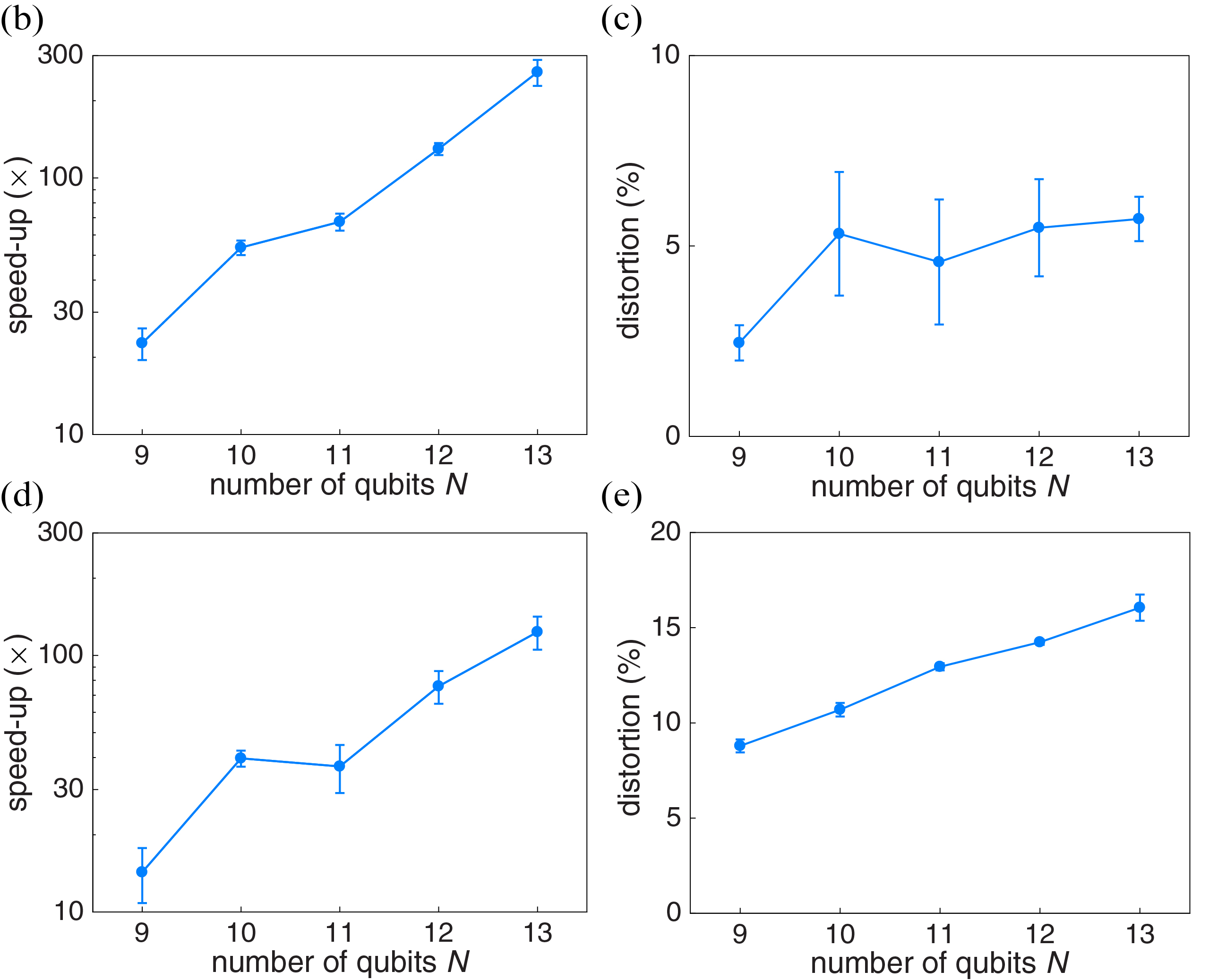}
     }
    \caption{Results of MCVQE state preparation under depolarizing noise ($p=0.1\%$) and $\epsilon=10^{-4}$. (a) Circuit for state preparation with 4 qubits as an example. (b and c) Speed-up and distortion of the LRET method under sparse noise channels. (d and e) Speed-up and distortion of LRET under dense noise channels.}
    \label{fig:benchmark chem (dep)}
\end{figure}

\subsection{Grover's Search Algorithm and Amplitude Amplification} \label{sec:benchmark Grover}

Amplitude amplification is a generalization of Grover's quantum search algorithm \cite{Grover1996,Grover1998,Tulsi2008}. The algorithm aims to find a solution, $x$, such that $f(x) = 1$ if $x$ is a solution and $f(x) = 0$ otherwise, implying $x$ is not a solution. If $x$ is a solution we say that $x$ is \emph{good}. If one was to randomly sample from a search space then $p_\text{good}$ is the probability of sampling a solution. For a classical search algorithm, it is expected that one would have to sample from the input space on the order of $\frac{1}{p_\text{good}}$ times to find a solution; however, using amplitude amplification, one can expect to find a solution using in only $\mathcal{O}(\sqrt{\frac{1}{p_\text{good}}})$ samples --- a quadratic speed-up over the classical case \cite{Grover1996,Brassard1998}.

In the algorithm, qubits are initialized to the a uniform superposition over the entire search space, where each basis state in the superposition corresponds to an element, $x_i \in \mathcal{X}$, where $\mathcal{X}$ is the search space of the problem. Next, a number of unitaries, known as Grover Iterates, act on the initialized state and boost the amplitudes of states that correspond to \emph{good} solutions. The number of Grover Iterates to apply is given by $\floor{\frac{\pi}{4}\sqrt{\frac{1}{p_\text{good}}}}$. A full measurement of the resulting circuit yields states corresponding to \emph{good} solutions with high probability.  

In our implementation, we define the function $f$ such that $f(x)=1$, i.e. a \emph{good} input, when the binary string representation, $x$, has a Hamming Weight, $\text{HW}(x)$, less than or equal to 2.  

\begin{equation}
    f =
    \begin{cases*}
        1 & if HW(x) $\leq$ 2 \\
        0        & otherwise
    \end{cases*}
\end{equation}
We run amplitude amplification on circuits ranging from 9 to 13 qubits with depolarizing noise with $p=0.1\%$ and $\epsilon=10^{-4}$ and compare LRET and FDM methods (Fig. \ref{fig:benchmark AA (dep)}). In the 13-qubit circuit, the rank of final density matrix of LRET is $0.5\%$ of the full rank with a trade-off of $3.7\%$ distortion. The similarity of the measurement results from both methods demonstrates the accuracy of LRET; in other words, that any information loss from eigenvalue truncation is insignificant when sampling from the resulting density matrix. Time-benchmarking of the two methods illustrates the speed-up provided by LRET, which continues to improve as the number of qubits increases.

 We note that it is not the intent of this study to most accurately predict the results of running this experiment on a particular hardware specification. When running on quantum hardware, gates must be decomposed into the set of gates native to the particular hardware, whereas here we model each of the Grover Iterates as a single unitary. Rather, the aim of the study is to show that LRET retains its accuracy and computational advantage not only for the random circuits used for benchmarking but also for circuits that may have legitimate applications and which exhibit more structure than the randomized circuits. 

\begin{figure}[]
  \centering
      \includegraphics[width=1.0\linewidth]{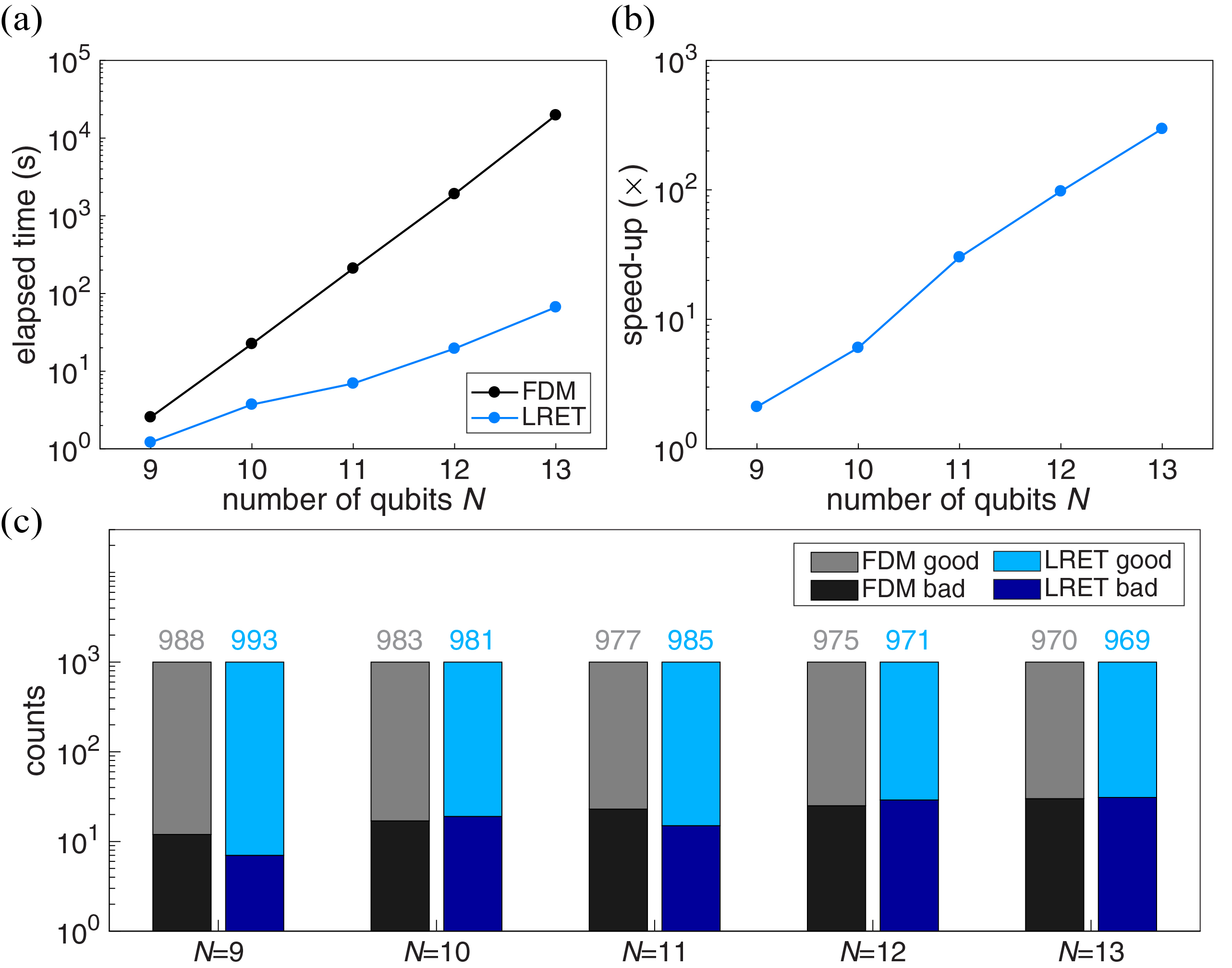}
  \caption{Results of Grover's search algorithm under depolarizing noise ($p=0.1\%$, $\epsilon=10^{-4}$). (a) Elapsed time on a quantum circuit using FDM (black) and LRET (light blue) in log scale. (b) Speed-up, defined by the ratio of elapsed time between FDM and LRET, in log scale. (c) Counts of good and bad solutions. The sampling number is 1000. The number on top of each bar marks the counts of good solution found by Grover's Search algorithm for the corresponding method and number of qubits $N$. The distortion of LRET density matrices are less than $3.8\%$ in these benchmarking.}
  \label{fig:benchmark AA (dep)}
\end{figure}

%%%%%%%%%%%%%%%%%%%%%%%%%%%%%%%%%%%
%     Conclusion and The Rest     %
%%%%%%%%%%%%%%%%%%%%%%%%%%%%%%%%%%%

\section{Summary and Outlook}
In this work we have demonstrated a method to efficiently simulate the evolution of mixed quantum states in noise channels through density matrix decompositions and low rank approximations. Iterative compression of the density matrices enable us to take the advantage of low rank evolution throughout the simulation of a noisy circuit with minimal error. Provided that the noise level of the individual channel is smaller than $0.1\sim1\%$, the density matrices are found to be well approximated by low rank matrices. We provide an entropy argument in the appendix to support this finding. Under the low noise assumption, our results show that the algorithm provides orders of magnitude of speed-up with a small error, on the order of $10^{-4}$, in the probability distributions associated with the output density matrix. The performance in speed and in distortion is robust in different circuit structures and for varying levels of entanglement, since we make no assumption on the symmetry or the entanglement of the circuits. 

We posit that our methodology can be naturally extended to work with observable quantities beyond that of simple measurement probabilities. Furthermore, given that our approach is based in linear algebraic primitives it is likely that the use of GPUs could further improve the performance. While our attention rests on the column space of density matrices, further speed-ups can be achieved by optimizing the representation with respect to the computational basis \cite{Vidal2003,Zhou2020,Noh2020}.

\section{Acknowledgements}
The authors thank Dr. Sean Weinberg, Juan I. Adame, Dr. Fabio Sanches and Dr. Adam Bouland for discussions on the theoretical ground of this work and providing useful references. RMP owns stock/options in QC Ware Corporation.

\appendix

\section{Low Rank Structure in Density Matrices}
\label{adx:LowRankEntropy}
It is found that a density matrix can be well approximated by a low rank matrix when the noise level is low. In this section, we provide an entropy analysis to show that this statement is true for the bit-flip and the depolarizing channels. The quantum circuits considered here have the same structure as Fig. \ref{fig:general circuit}. It is known that the noise in quantum computers can be well characterized by only one and two-qubits Kraus operators \cite{Arute2019}. The result derived here holds for both one and two-qubits Kraus operators when the number of qubits $N$ is large. For convenience, we consider only one-qubit Kraus operators on each qubit. Each Kraus operator $\mathcal{K}$ is assumed to have the form 
\begin{equation} \label{eqn:small noise evolution}
    \begin{split}
        \rho^{(d+1)}=
        \mathcal{K}\rho^{(d)}
        & = \sum^{N_\alpha}_{\alpha=0} p_\alpha K \rho^{(d)} K^\dag \\
        & = (1-p) \rho^{(d)} + \sum^{N_\alpha}_{\alpha=1} p_\alpha K \rho^{(d)} K^\dag
    \end{split}
\end{equation} 
where $K_\alpha$ are Kraus matrices and $\sum^{N_\alpha}_{\alpha=1} p_\alpha=p$. We assume $p$ is small. In other words, the noise level in the circuit is small. 

We use the von Neumann entropy of the density matrix, $S(\rho)=-\text{Tr}(\rho \log_2(\rho))$, to characterize the amount of information in a density matrix, and $R=e^S$ as the effective rank. For a pure state, $S=0$ and $R=1$. For a fully mixed state, $S=N$ and $R=2^N$. Under the noise channels in Eq. (\ref{eqn:small noise evolution}), the entropy is bounded by the property of concavity \cite{Kim2014,Winter2016}
\begin{equation} \label{eqn:concavity}
    \begin{split}
        & \quad 
        \sum_\alpha p_\alpha S(K_{\alpha} \rho^{(d)} K_{\alpha}^\dag) \\
        & \leq 
        S(\rho^{(d+1)})    \\
        & \leq
        \sum_\alpha p_\alpha S(K_{\alpha} \rho^{(d)} K_{\alpha}^\dag) 
        - \sum_\alpha p_\alpha \log_2(p_\alpha) .
    \end{split}
\end{equation}
Note that the von Neumann entropy is non-increasing under the matrix transformation, $K_{\alpha} \rho^{(d)} K_{\alpha}^\dag$, So we have $S(K_{\alpha} \rho^{(d)} K_{\alpha}^\dag) \leq S(\rho^{(d)})$. The equality holds when $K_\alpha$ is an isometry such as unitary matrices. As a result, the second inequality in Eq. (\ref{eqn:concavity}) reduces to 
\begin{equation} \label{eqn:concavity2}
        S(\rho^{(d+1)})   
        \leq
        S(\rho^{(d)}) - \sum_\alpha p_\alpha \log_2(p_\alpha).
\end{equation}
This provides an inequality for entropy change $\Delta S \equiv S(\rho^{(d+1)}) - S(\rho^{(d)})< -\sum_\alpha p_\alpha \log_2(p_\alpha)$. 
The Kraus operators considered here are a direct product of one-qubit Kraus operators. For the case that there is a bit-flip channel on each qubit, $p_\alpha$ follows the Bernoulli distribution with sequence length $N$ and probability $\{1-p,p\}$. The upper bound of the entropy change is $\Delta S_\text{bit}= -\sum_\alpha p_\alpha \log_2(p_\alpha) = -Np\log_2(p)-N(1-p)\log_2(1-p)$. The literature on quantum hardware has reported qubits with noise level of $0.1\%$ \cite{Yang2019,Burrell2010,Zahedinejad,Yoneda2018,Harvey2018,Reed2010}. When the noise is small, $\Delta S_\text{bit}$ is well described by first order approximation in $p$, $\Delta S_\text{bit}=N(p-p\log_2(p))$. The effective dimensionality of the density matrix in this approximation is
\begin{equation}
    \begin{split}
        R_\text{bit} 
        = 2^{\Delta S_\text{bit}}
        = ((\frac{2}{p})^p)^N
        = (1-\gamma_\text{bit})^N
        \simeq 1-N\gamma_\text{bit}
    \end{split}
\end{equation}
where $\gamma_\text{bit}\equiv(\frac{2}{p})^p-1\simeq0.008$ when $p=0.1\%$. The effective dimensionality grows approximately linearly with number of qubits. 

For the case that there is a depolarizing channel on each qubit, $p_\alpha$ follows the categorical distribution with sequence length $N$ and probability $\{1-p,\frac p3,\frac p3,\frac p3\}$. The upper bound of the entropy change is $\Delta S_\text{dep}= -\sum_\alpha p_\alpha \log_2(p_\alpha) = -Np\log_2(\frac p3)-N(1-p)\log_2(1-p)$ or $\Delta S_\text{dep}=N(p-p\log_2(\frac{p}{3}))$ in the first order approximation. The effective dimensionality of the density matrix in this approximation is
\begin{equation}
    \begin{split}
        R_\text{dep} 
        = 2^{\Delta S_\text{dep}}
        = ((\frac{6}{p})^p)^N
        = (1-\gamma_\text{dep})^N
        \simeq 1-N\gamma_\text{dep}
    \end{split}
\end{equation}
where $\gamma_\text{dep}\equiv(\frac{6}{p})^p-1\simeq0.009$ when $p=0.1\%$. The effective dimensionality grows approximately linearly with number of qubits. These results suggest that, while the size of the density matrix grows exponentially with $N$, the effective rank of the density matrix grows linearly with $N$ in a good approximation. The linearly approximation is good with error $<5\%$ up to $N=130$ qubits.

\section{Low Rank Eigendecomposition}
This section provides a theorem for finding eigenvalues efficiently without forming a density matrix explicitly. 
\begin{theorem}
	Let $B$ be a $n \times m$ matrix, where $n>m$. $BB^{\dag}$ and $B^{\dag}B$ share $m$ eigenvalues. Furthermore, if $u$ is a eigenvector of $B^{\dag}B$, $Bu$ is the eigenvector of $BB^{\dag}$ that shares the same eigenvalues.
\end{theorem}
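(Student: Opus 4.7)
The theorem is the standard SVD-style result linking the Gram matrices $B^\dag B$ and $B B^\dag$, so the natural plan is to transport eigenpairs across by left-multiplication by $B$ and then carefully account for the zero-eigenvalue case which is the only subtle point.

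My first step is to establish the eigenvector transport. Suppose $B^\dag B u = \lambda u$ with $u \neq 0$. Multiplying on the left by $B$ gives
\begin{equation}
    B B^\dag (B u) = B (B^\dag B u) = \lambda (B u),
\end{equation}
so $B u$ is an eigenvector of $B B^\dag$ with the same eigenvalue $\lambda$, provided $B u \neq 0$. To verify non-vanishing in the nonzero-$\lambda$ case, I would take the inner product of the eigenvalue equation with $u$:
\begin{equation}
    \lambda \|u\|^2 = u^\dag B^\dag B u = \|B u\|^2,
\end{equation}
so $\lambda \neq 0$ forces $\|B u\|^2 \neq 0$. This settles the second assertion for all nonzero eigenvalues and simultaneously shows that every nonzero eigenvalue of $B^\dag B$ is a nonzero eigenvalue of $B B^\dag$.

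Next I would treat the zero eigenvalues of $B^\dag B$, which is where I expect the only real bookkeeping obstacle. The calculation above also shows that $B^\dag B u = 0 \Longleftrightarrow B u = 0$, so $\ker(B^\dag B) = \ker(B)$. Hence $\mathrm{rank}(B^\dag B) = \mathrm{rank}(B)$, and by the symmetric argument $\mathrm{rank}(B B^\dag) = \mathrm{rank}(B)$ as well. Letting $r = \mathrm{rank}(B)$, the matrix $B^\dag B$ has $m - r$ zero eigenvalues (counted with multiplicity) and $B B^\dag$ has $n - r$ zero eigenvalues. Because $n > m \geq r$ we have $n - r \geq m - r$, so $B B^\dag$ has at least as many zero eigenvalues as $B^\dag B$. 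Combined with the matching of nonzero eigenvalues (with multiplicities, via the injectivity of $u \mapsto Bu$ on the nonzero-eigenvalue eigenspaces), this accounts for all $m$ eigenvalues of $B^\dag B$ appearing in the spectrum of $B B^\dag$.

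The main subtlety, and where I would be most careful in the write-up, is the multiplicity bookkeeping: I must argue that if $\{u_1,\dots,u_k\}$ is a linearly independent set of eigenvectors of $B^\dag B$ for a common nonzero eigenvalue $\lambda$, then $\{Bu_1,\dots,Bu_k\}$ is also linearly independent, so that the multiplicities agree on the nonzero spectrum. This follows because $\sum_i c_i B u_i = 0$ implies $B^\dag B \sum_i c_i u_i = \lambda \sum_i c_i u_i = 0$, and $\lambda \neq 0$ forces $\sum_i c_i u_i = 0$, hence all $c_i = 0$. With this in hand, the two claims of the theorem — shared $m$ eigenvalues and the explicit transport $u \mapsto Bu$ — both follow cleanly.
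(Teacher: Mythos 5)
Your proof is correct, but it takes a different route from the paper's. The paper invokes the singular value decomposition $B = U\Sigma V^{\dag}$ and simply reads off that $BB^{\dag} = U\Sigma^2 U^{\dag}$ and $B^{\dag}B = V\Sigma^2 V^{\dag}$ carry the same diagonal $\Sigma^2$, so the $m$ shared eigenvalues come for free from the decomposition; the eigenvector transport is then the same left-multiplication by $B$ that you use. You instead avoid the SVD entirely and rebuild the eigenvalue matching from elementary pieces: the transport computation $BB^{\dag}(Bu) = \lambda(Bu)$, the identity $\lambda\|u\|^2 = \|Bu\|^2$ to certify $Bu \neq 0$ for $\lambda \neq 0$, the kernel equality $\ker(B^{\dag}B) = \ker(B)$ to count zero eigenvalues, and an injectivity argument to preserve multiplicities. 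The paper's approach buys brevity, but at the cost of presupposing the SVD (whose standard proof itself diagonalizes $B^{\dag}B$, so there is a mild circularity if one is being foundational) and of glossing over exactly the points you nail down: it never verifies $Bu \neq 0$, so its final claim ``$u_D = Bu_M$'' is vacuous for eigenvectors in $\ker B$, and it does not address multiplicities. Your write-up is the more self-contained and more rigorous of the two; the only thing I would add is a one-line remark that ``share $m$ eigenvalues'' in the statement is naturally read as counting with multiplicity, which is precisely what your rank bookkeeping establishes.
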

\begin{proof}
	The matrix $B$ has the singular value decomposition 
	$$ B = U\Sigma V^{\dag} $$
	where $U$ is a $n \times m$ matrix with orthonormal columns, and $V$ is a $m \times m$ orthonormal matrix. From $B$, we can construct two Hermitian matrices
	$$ D \equiv BB^{\dag} = U\Sigma V^{\dag}V \Sigma U^{\dag} = U\Sigma^2U^{\dag} \equiv U\Lambda U^{\dag} $$,
	$$ M \equiv B^{\dag}B = V\Lambda V^{\dag} $$
	where $D$ is called full space matrix, and $M$ is the subspace matrix. Denoting an eigenvector of $M$ as $u$, and its eigenvalue as $\lambda$, we have 
	$$ Mu = \lambda u.$$
	Then, by multiplying both sides on the left by $B$ we have
	$$ BMu = BB^{\dag}Bu = D(Bv) = \lambda (Bu).$$
	We have proved that $BB^{\dag}$ and $B^{\dag}B$ share $m$ eigenvalues, and that their eigenvectors, $u_D$ and $u_M$, are related by the equation $u_D = B u_M$. 
\end{proof}

\bibliography{noisemodel}

\begin{thebibliography}{83}
\expandafter\ifx\csname natexlab\endcsname\relax\def\natexlab#1{#1}\fi
\expandafter\ifx\csname bibnamefont\endcsname\relax
  \def\bibnamefont#1{#1}\fi
\expandafter\ifx\csname bibfnamefont\endcsname\relax
  \def\bibfnamefont#1{#1}\fi
\expandafter\ifx\csname citenamefont\endcsname\relax
  \def\citenamefont#1{#1}\fi
\expandafter\ifx\csname url\endcsname\relax
  \def\url#1{\texttt{#1}}\fi
\expandafter\ifx\csname urlprefix\endcsname\relax\def\urlprefix{URL }\fi
\providecommand{\bibinfo}[2]{#2}
\providecommand{\eprint}[2][]{\url{#2}}

\bibitem[{\citenamefont{Pellizzari et~al.}(1995)\citenamefont{Pellizzari,
  Gardiner, Cirac, and Zoller}}]{Pellizzari1995}
\bibinfo{author}{\bibfnamefont{T.}~\bibnamefont{Pellizzari}},
  \bibinfo{author}{\bibfnamefont{S.~A.} \bibnamefont{Gardiner}},
  \bibinfo{author}{\bibfnamefont{J.~I.} \bibnamefont{Cirac}}, \bibnamefont{and}
  \bibinfo{author}{\bibfnamefont{P.}~\bibnamefont{Zoller}},
  \bibinfo{journal}{Phys. Rev. Lett.} \textbf{\bibinfo{volume}{75}},
  \bibinfo{pages}{3788} (\bibinfo{year}{1995}).

\bibitem[{\citenamefont{Chuang et~al.}(1995)\citenamefont{Chuang, Laflamme,
  Shor, and Zurek}}]{Chuang1995}
\bibinfo{author}{\bibfnamefont{I.~L.} \bibnamefont{Chuang}},
  \bibinfo{author}{\bibfnamefont{R.}~\bibnamefont{Laflamme}},
  \bibinfo{author}{\bibfnamefont{P.~W.} \bibnamefont{Shor}}, \bibnamefont{and}
  \bibinfo{author}{\bibfnamefont{W.~H.} \bibnamefont{Zurek}},
  \bibinfo{journal}{Science} \textbf{\bibinfo{volume}{270}},
  \bibinfo{pages}{1633} (\bibinfo{year}{1995}), ISSN \bibinfo{issn}{0036-8075}.

\bibitem[{\citenamefont{Copsey et~al.}(2003)\citenamefont{Copsey, Oskin, Cross,
  Metodiev, Chong, Chuang, and Kubiatowicz}}]{Copsey2003}
\bibinfo{author}{\bibfnamefont{D.}~\bibnamefont{Copsey}},
  \bibinfo{author}{\bibfnamefont{M.}~\bibnamefont{Oskin}},
  \bibinfo{author}{\bibfnamefont{A.}~\bibnamefont{Cross}},
  \bibinfo{author}{\bibfnamefont{T.}~\bibnamefont{Metodiev}},
  \bibinfo{author}{\bibfnamefont{F.}~\bibnamefont{Chong}},
  \bibinfo{author}{\bibfnamefont{I.}~\bibnamefont{Chuang}}, \bibnamefont{and}
  \bibinfo{author}{\bibfnamefont{J.}~\bibnamefont{Kubiatowicz}}
  (\bibinfo{year}{2003}).

\bibitem[{\citenamefont{Ashhab et~al.}(2006)\citenamefont{Ashhab, Johansson,
  and Nori}}]{Ashhab2006}
\bibinfo{author}{\bibfnamefont{S.}~\bibnamefont{Ashhab}},
  \bibinfo{author}{\bibfnamefont{J.~R.} \bibnamefont{Johansson}},
  \bibnamefont{and} \bibinfo{author}{\bibfnamefont{F.}~\bibnamefont{Nori}},
  \bibinfo{journal}{Phys. Rev. A} \textbf{\bibinfo{volume}{74}},
  \bibinfo{pages}{52330} (\bibinfo{year}{2006}).

\bibitem[{\citenamefont{Arute et~al.}(2019)\citenamefont{Arute, Arya, Babbush,
  Bacon, Bardin, Barends, Biswas, Boixo, Brandao, Buell et~al.}}]{Arute2019}
\bibinfo{author}{\bibfnamefont{F.}~\bibnamefont{Arute}},
  \bibinfo{author}{\bibfnamefont{K.}~\bibnamefont{Arya}},
  \bibinfo{author}{\bibfnamefont{R.}~\bibnamefont{Babbush}},
  \bibinfo{author}{\bibfnamefont{D.}~\bibnamefont{Bacon}},
  \bibinfo{author}{\bibfnamefont{J.~C.} \bibnamefont{Bardin}},
  \bibinfo{author}{\bibfnamefont{R.}~\bibnamefont{Barends}},
  \bibinfo{author}{\bibfnamefont{R.}~\bibnamefont{Biswas}},
  \bibinfo{author}{\bibfnamefont{S.}~\bibnamefont{Boixo}},
  \bibinfo{author}{\bibfnamefont{F.~G. S.~L.} \bibnamefont{Brandao}},
  \bibinfo{author}{\bibfnamefont{D.~A.} \bibnamefont{Buell}},
  \bibnamefont{et~al.}, \bibinfo{journal}{Nature}
  \textbf{\bibinfo{volume}{574}}, \bibinfo{pages}{505} (\bibinfo{year}{2019}),
  ISSN \bibinfo{issn}{1476-4687}.

\bibitem[{\citenamefont{Gomes}(2018)}]{Gomes2018}
\bibinfo{author}{\bibfnamefont{L.}~\bibnamefont{Gomes}}, \bibinfo{journal}{IEEE
  Spectrum} \textbf{\bibinfo{volume}{55}}, \bibinfo{pages}{42}
  (\bibinfo{year}{2018}), ISSN \bibinfo{issn}{VO - 55}.

\bibitem[{\citenamefont{Preskill}(2018)}]{Preskill2018}
\bibinfo{author}{\bibfnamefont{J.}~\bibnamefont{Preskill}},
  \bibinfo{journal}{Quantum} \textbf{\bibinfo{volume}{2}}, \bibinfo{pages}{79}
  (\bibinfo{year}{2018}), ISSN \bibinfo{issn}{2521-327X}.

\bibitem[{\citenamefont{Harper et~al.}(2020)\citenamefont{Harper, Flammia, and
  Wallman}}]{Harper2020}
\bibinfo{author}{\bibfnamefont{R.}~\bibnamefont{Harper}},
  \bibinfo{author}{\bibfnamefont{S.~T.} \bibnamefont{Flammia}},
  \bibnamefont{and} \bibinfo{author}{\bibfnamefont{J.~J.}
  \bibnamefont{Wallman}}, \bibinfo{journal}{Nature Physics}
  (\bibinfo{year}{2020}), ISSN \bibinfo{issn}{1745-2481}.

\bibitem[{\citenamefont{Pednault et~al.}(2017)\citenamefont{Pednault, Gunnels,
  Nannicini, Horesh, Magerlein, Solomonik, Draeger, Holland, and
  Wisnieff}}]{Pednault2017}
\bibinfo{author}{\bibfnamefont{E.}~\bibnamefont{Pednault}},
  \bibinfo{author}{\bibfnamefont{J.~A.} \bibnamefont{Gunnels}},
  \bibinfo{author}{\bibfnamefont{G.}~\bibnamefont{Nannicini}},
  \bibinfo{author}{\bibfnamefont{L.}~\bibnamefont{Horesh}},
  \bibinfo{author}{\bibfnamefont{T.}~\bibnamefont{Magerlein}},
  \bibinfo{author}{\bibfnamefont{E.}~\bibnamefont{Solomonik}},
  \bibinfo{author}{\bibfnamefont{E.~W.} \bibnamefont{Draeger}},
  \bibinfo{author}{\bibfnamefont{E.~T.} \bibnamefont{Holland}},
  \bibnamefont{and} \bibinfo{author}{\bibfnamefont{R.}~\bibnamefont{Wisnieff}},
  \bibinfo{journal}{arXiv e-prints} p. \bibinfo{pages}{arXiv:1710.05867}
  (\bibinfo{year}{2017}).

\bibitem[{\citenamefont{Chen et~al.}(2018)\citenamefont{Chen, Zhou, Xue, Yang,
  Guo, and Guo}}]{Chen2018}
\bibinfo{author}{\bibfnamefont{Z.-Y.} \bibnamefont{Chen}},
  \bibinfo{author}{\bibfnamefont{Q.}~\bibnamefont{Zhou}},
  \bibinfo{author}{\bibfnamefont{C.}~\bibnamefont{Xue}},
  \bibinfo{author}{\bibfnamefont{X.}~\bibnamefont{Yang}},
  \bibinfo{author}{\bibfnamefont{G.-C.} \bibnamefont{Guo}}, \bibnamefont{and}
  \bibinfo{author}{\bibfnamefont{G.-P.} \bibnamefont{Guo}},
  \bibinfo{journal}{Science Bulletin} \textbf{\bibinfo{volume}{63}},
  \bibinfo{pages}{964} (\bibinfo{year}{2018}), ISSN \bibinfo{issn}{2095-9273}.

\bibitem[{\citenamefont{Dang et~al.}(2019)\citenamefont{Dang, Hill, and
  Hollenberg}}]{Dang2019}
\bibinfo{author}{\bibfnamefont{A.}~\bibnamefont{Dang}},
  \bibinfo{author}{\bibfnamefont{C.~D.} \bibnamefont{Hill}}, \bibnamefont{and}
  \bibinfo{author}{\bibfnamefont{L.~C.~L.} \bibnamefont{Hollenberg}},
  \bibinfo{journal}{Quantum} \textbf{\bibinfo{volume}{3}}, \bibinfo{pages}{116}
  (\bibinfo{year}{2019}), ISSN \bibinfo{issn}{2521-327X}.

\bibitem[{\citenamefont{Bravyi and Gosset}(2016)}]{Bravyi2016}
\bibinfo{author}{\bibfnamefont{S.}~\bibnamefont{Bravyi}} \bibnamefont{and}
  \bibinfo{author}{\bibfnamefont{D.}~\bibnamefont{Gosset}},
  \bibinfo{journal}{Phys. Rev. Lett.} \textbf{\bibinfo{volume}{116}},
  \bibinfo{pages}{250501} (\bibinfo{year}{2016}).

\bibitem[{\citenamefont{Jozsa and {Van Den Nest}}(2014)}]{Jozsa2014}
\bibinfo{author}{\bibfnamefont{R.}~\bibnamefont{Jozsa}} \bibnamefont{and}
  \bibinfo{author}{\bibfnamefont{M.}~\bibnamefont{{Van Den Nest}}},
  \bibinfo{journal}{Quantum Info. Comput.} \textbf{\bibinfo{volume}{14}},
  \bibinfo{pages}{633} (\bibinfo{year}{2014}), ISSN \bibinfo{issn}{1533-7146}.

\bibitem[{\citenamefont{Vidal}(2003)}]{Vidal2003}
\bibinfo{author}{\bibfnamefont{G.}~\bibnamefont{Vidal}},
  \bibinfo{journal}{Phys. Rev. Lett.} \textbf{\bibinfo{volume}{91}},
  \bibinfo{pages}{147902} (\bibinfo{year}{2003}).

\bibitem[{\citenamefont{Plesch and Bu{\v{z}}ek}(2010)}]{Plesch2010}
\bibinfo{author}{\bibfnamefont{M.}~\bibnamefont{Plesch}} \bibnamefont{and}
  \bibinfo{author}{\bibfnamefont{V.}~\bibnamefont{Bu{\v{z}}ek}},
  \bibinfo{journal}{Phys. Rev. A} \textbf{\bibinfo{volume}{81}},
  \bibinfo{pages}{32317} (\bibinfo{year}{2010}).

\bibitem[{\citenamefont{Bartlett et~al.}(2002)\citenamefont{Bartlett, Sanders,
  Braunstein, and Nemoto}}]{Bartlett2002}
\bibinfo{author}{\bibfnamefont{S.~D.} \bibnamefont{Bartlett}},
  \bibinfo{author}{\bibfnamefont{B.~C.} \bibnamefont{Sanders}},
  \bibinfo{author}{\bibfnamefont{S.~L.} \bibnamefont{Braunstein}},
  \bibnamefont{and} \bibinfo{author}{\bibfnamefont{K.}~\bibnamefont{Nemoto}},
  \bibinfo{journal}{Phys. Rev. Lett.} \textbf{\bibinfo{volume}{88}},
  \bibinfo{pages}{97904} (\bibinfo{year}{2002}).

\bibitem[{\citenamefont{Yoran and Short}(2007)}]{Yoran2007}
\bibinfo{author}{\bibfnamefont{N.}~\bibnamefont{Yoran}} \bibnamefont{and}
  \bibinfo{author}{\bibfnamefont{A.~J.} \bibnamefont{Short}},
  \bibinfo{journal}{Phys. Rev. A} \textbf{\bibinfo{volume}{76}},
  \bibinfo{pages}{42321} (\bibinfo{year}{2007}).

\bibitem[{\citenamefont{Browne}(2007)}]{Browne2007}
\bibinfo{author}{\bibfnamefont{D.~E.} \bibnamefont{Browne}},
  \bibinfo{journal}{New Journal of Physics} \textbf{\bibinfo{volume}{9}},
  \bibinfo{pages}{146} (\bibinfo{year}{2007}), ISSN \bibinfo{issn}{1367-2630}.

\bibitem[{\citenamefont{Shi et~al.}(2006)\citenamefont{Shi, Duan, and
  Vidal}}]{Shi2006}
\bibinfo{author}{\bibfnamefont{Y.-Y.} \bibnamefont{Shi}},
  \bibinfo{author}{\bibfnamefont{L.-M.} \bibnamefont{Duan}}, \bibnamefont{and}
  \bibinfo{author}{\bibfnamefont{G.}~\bibnamefont{Vidal}},
  \bibinfo{journal}{Phys. Rev. A} \textbf{\bibinfo{volume}{74}},
  \bibinfo{pages}{22320} (\bibinfo{year}{2006}).

\bibitem[{\citenamefont{Kassal et~al.}(2008)\citenamefont{Kassal, Jordan, Love,
  Mohseni, and Aspuru-Guzik}}]{Kassal2008}
\bibinfo{author}{\bibfnamefont{I.}~\bibnamefont{Kassal}},
  \bibinfo{author}{\bibfnamefont{S.~P.} \bibnamefont{Jordan}},
  \bibinfo{author}{\bibfnamefont{P.~J.} \bibnamefont{Love}},
  \bibinfo{author}{\bibfnamefont{M.}~\bibnamefont{Mohseni}}, \bibnamefont{and}
  \bibinfo{author}{\bibfnamefont{A.}~\bibnamefont{Aspuru-Guzik}},
  \bibinfo{journal}{Proceedings of the National Academy of Sciences}
  \textbf{\bibinfo{volume}{105}}, \bibinfo{pages}{18681}
  (\bibinfo{year}{2008}), ISSN \bibinfo{issn}{0027-8424}.

\bibitem[{\citenamefont{Khammassi et~al.}(2017)\citenamefont{Khammassi, Ashraf,
  Fu, Almudever, and Bertels}}]{Khammassi2017}
\bibinfo{author}{\bibfnamefont{N.}~\bibnamefont{Khammassi}},
  \bibinfo{author}{\bibfnamefont{I.}~\bibnamefont{Ashraf}},
  \bibinfo{author}{\bibfnamefont{X.}~\bibnamefont{Fu}},
  \bibinfo{author}{\bibfnamefont{C.~G.} \bibnamefont{Almudever}},
  \bibnamefont{and} \bibinfo{author}{\bibfnamefont{K.}~\bibnamefont{Bertels}},
  in \emph{\bibinfo{booktitle}{Design, Automation {\&} Test in Europe
  Conference {\&} Exhibition (DATE), 2017}} (\bibinfo{year}{2017}), pp.
  \bibinfo{pages}{464--469}, ISBN \bibinfo{isbn}{1558-1101 VO -}.

\bibitem[{\citenamefont{Wei et~al.}(2018)\citenamefont{Wei, Xin, and
  Long}}]{Wei2018}
\bibinfo{author}{\bibfnamefont{S.-J.} \bibnamefont{Wei}},
  \bibinfo{author}{\bibfnamefont{T.}~\bibnamefont{Xin}}, \bibnamefont{and}
  \bibinfo{author}{\bibfnamefont{G.-L.} \bibnamefont{Long}},
  \bibinfo{journal}{Science China Physics, Mechanics {\&} Astronomy}
  \textbf{\bibinfo{volume}{61}}, \bibinfo{pages}{70311} (\bibinfo{year}{2018}),
  ISSN \bibinfo{issn}{1869-1927}.

\bibitem[{\citenamefont{Chaudhary et~al.}(2019)\citenamefont{Chaudhary, Mahato,
  Priyadarshi, Roshan, Utkarsh, and Patel}}]{Chaudhary2019}
\bibinfo{author}{\bibfnamefont{H.}~\bibnamefont{Chaudhary}},
  \bibinfo{author}{\bibfnamefont{B.}~\bibnamefont{Mahato}},
  \bibinfo{author}{\bibfnamefont{L.}~\bibnamefont{Priyadarshi}},
  \bibinfo{author}{\bibfnamefont{N.}~\bibnamefont{Roshan}},
  \bibinfo{author}{\bibnamefont{Utkarsh}}, \bibnamefont{and}
  \bibinfo{author}{\bibfnamefont{A.~D.} \bibnamefont{Patel}},
  \bibinfo{journal}{arXiv e-prints} p. \bibinfo{pages}{arXiv:1908.05154}
  (\bibinfo{year}{2019}), \eprint{1908.05154}.

\bibitem[{\citenamefont{Jones et~al.}(2019)\citenamefont{Jones, Brown, Bush,
  and Benjamin}}]{Jones2019}
\bibinfo{author}{\bibfnamefont{T.}~\bibnamefont{Jones}},
  \bibinfo{author}{\bibfnamefont{A.}~\bibnamefont{Brown}},
  \bibinfo{author}{\bibfnamefont{I.}~\bibnamefont{Bush}}, \bibnamefont{and}
  \bibinfo{author}{\bibfnamefont{S.~C.} \bibnamefont{Benjamin}},
  \bibinfo{journal}{Scientific Reports} \textbf{\bibinfo{volume}{9}},
  \bibinfo{pages}{10736} (\bibinfo{year}{2019}), ISSN
  \bibinfo{issn}{2045-2322}.

\bibitem[{\citenamefont{Aleksandrowicz
  et~al.}(2019)\citenamefont{Aleksandrowicz, Alexander, Barkoutsos, Bello,
  Ben-Haim, Bucher, Cabrera-Hern{\'{a}}ndez, Carballo-Franquis, Chen, Chen
  et~al.}}]{Qiskit2019}
\bibinfo{author}{\bibfnamefont{G.}~\bibnamefont{Aleksandrowicz}},
  \bibinfo{author}{\bibfnamefont{T.}~\bibnamefont{Alexander}},
  \bibinfo{author}{\bibfnamefont{P.}~\bibnamefont{Barkoutsos}},
  \bibinfo{author}{\bibfnamefont{L.}~\bibnamefont{Bello}},
  \bibinfo{author}{\bibfnamefont{Y.}~\bibnamefont{Ben-Haim}},
  \bibinfo{author}{\bibfnamefont{D.}~\bibnamefont{Bucher}},
  \bibinfo{author}{\bibfnamefont{F.~J.} \bibnamefont{Cabrera-Hern{\'{a}}ndez}},
  \bibinfo{author}{\bibfnamefont{J.}~\bibnamefont{Carballo-Franquis}},
  \bibinfo{author}{\bibfnamefont{A.}~\bibnamefont{Chen}},
  \bibinfo{author}{\bibfnamefont{C.-F.} \bibnamefont{Chen}},
  \bibnamefont{et~al.}, \emph{\bibinfo{title}{{Qiskit: An Open-source Framework
  for Quantum Computing}}} (\bibinfo{year}{2019}).

\bibitem[{\citenamefont{Google}(2019)}]{Cirq2019}
\bibinfo{author}{\bibnamefont{Google}}, \emph{\bibinfo{title}{{Cirq: A python
  framework for creating, editing, and invoking Noisy Intermediate Scale
  Quantum (NISQ) circuits.}}} (\bibinfo{year}{2019}),
  \urlprefix\url{https://github.com/quantumlib/Cirq}.

\bibitem[{\citenamefont{Breuer and Petruccione}(2007)}]{Breuer2007}
\bibinfo{author}{\bibfnamefont{H.-P.} \bibnamefont{Breuer}} \bibnamefont{and}
  \bibinfo{author}{\bibfnamefont{F.}~\bibnamefont{Petruccione}},
  \emph{\bibinfo{title}{{The Theory of Open Quantum Systems}}}
  (\bibinfo{publisher}{Oxford University Press}, \bibinfo{address}{Oxford},
  \bibinfo{year}{2007}), ISBN \bibinfo{isbn}{9780199213900}.

\bibitem[{\citenamefont{Dalibard et~al.}(1992)\citenamefont{Dalibard, Castin,
  and M{\o}lmer}}]{Dalibard1992}
\bibinfo{author}{\bibfnamefont{J.}~\bibnamefont{Dalibard}},
  \bibinfo{author}{\bibfnamefont{Y.}~\bibnamefont{Castin}}, \bibnamefont{and}
  \bibinfo{author}{\bibfnamefont{K.}~\bibnamefont{M{\o}lmer}},
  \bibinfo{journal}{Phys. Rev. Lett.} \textbf{\bibinfo{volume}{68}},
  \bibinfo{pages}{580} (\bibinfo{year}{1992}).

\bibitem[{\citenamefont{M{\o}lmer et~al.}(1993)\citenamefont{M{\o}lmer, Castin,
  and Dalibard}}]{Molmer1993}
\bibinfo{author}{\bibfnamefont{K.}~\bibnamefont{M{\o}lmer}},
  \bibinfo{author}{\bibfnamefont{Y.}~\bibnamefont{Castin}}, \bibnamefont{and}
  \bibinfo{author}{\bibfnamefont{J.}~\bibnamefont{Dalibard}},
  \bibinfo{journal}{J. Opt. Soc. Am. B} \textbf{\bibinfo{volume}{10}},
  \bibinfo{pages}{524} (\bibinfo{year}{1993}).

\bibitem[{\citenamefont{Bassi and Deckert}(2008)}]{Bassi2008}
\bibinfo{author}{\bibfnamefont{A.}~\bibnamefont{Bassi}} \bibnamefont{and}
  \bibinfo{author}{\bibfnamefont{D.-A.} \bibnamefont{Deckert}},
  \bibinfo{journal}{Phys. Rev. A} \textbf{\bibinfo{volume}{77}},
  \bibinfo{pages}{32323} (\bibinfo{year}{2008}).

\bibitem[{\citenamefont{Guerreschi et~al.}(2020)\citenamefont{Guerreschi,
  Hogaboam, Baruffa, and Sawaya}}]{Guerreschi2020}
\bibinfo{author}{\bibfnamefont{G.~G.} \bibnamefont{Guerreschi}},
  \bibinfo{author}{\bibfnamefont{J.}~\bibnamefont{Hogaboam}},
  \bibinfo{author}{\bibfnamefont{F.}~\bibnamefont{Baruffa}}, \bibnamefont{and}
  \bibinfo{author}{\bibfnamefont{N.~P.~D.} \bibnamefont{Sawaya}},
  \bibinfo{journal}{Quantum Science and Technology}
  \textbf{\bibinfo{volume}{5}}, \bibinfo{pages}{34007} (\bibinfo{year}{2020}).

\bibitem[{\citenamefont{{Abid Moueddene} et~al.}(2020)\citenamefont{{Abid
  Moueddene}, Khammassi, Bertels, and Almudever}}]{Abid2020}
\bibinfo{author}{\bibfnamefont{A.}~\bibnamefont{{Abid Moueddene}}},
  \bibinfo{author}{\bibfnamefont{N.}~\bibnamefont{Khammassi}},
  \bibinfo{author}{\bibfnamefont{K.}~\bibnamefont{Bertels}}, \bibnamefont{and}
  \bibinfo{author}{\bibfnamefont{C.~G.} \bibnamefont{Almudever}},
  \bibinfo{journal}{arXiv e-prints} p. \bibinfo{pages}{arXiv:2005.06337}
  (\bibinfo{year}{2020}), \eprint{2005.06337}.

\bibitem[{\citenamefont{Gorini et~al.}(1976)\citenamefont{Gorini, Kossakowski,
  and Sudarshan}}]{Gorini1976}
\bibinfo{author}{\bibfnamefont{V.}~\bibnamefont{Gorini}},
  \bibinfo{author}{\bibfnamefont{A.}~\bibnamefont{Kossakowski}},
  \bibnamefont{and} \bibinfo{author}{\bibfnamefont{E.~C.~G.}
  \bibnamefont{Sudarshan}}, \bibinfo{journal}{Journal of Mathematical Physics}
  \textbf{\bibinfo{volume}{17}}, \bibinfo{pages}{821} (\bibinfo{year}{1976}),
  ISSN \bibinfo{issn}{0022-2488}.

\bibitem[{\citenamefont{Lindblad}(1976)}]{Lindblad1976}
\bibinfo{author}{\bibfnamefont{G.}~\bibnamefont{Lindblad}},
  \bibinfo{journal}{Comm. Math. Phys.} \textbf{\bibinfo{volume}{48}},
  \bibinfo{pages}{119} (\bibinfo{year}{1976}), ISSN \bibinfo{issn}{0010-3616}.

\bibitem[{\citenamefont{Jelezko et~al.}(2004)\citenamefont{Jelezko, Gaebel,
  Popa, Gruber, and Wrachtrup}}]{Jelezko2004}
\bibinfo{author}{\bibfnamefont{F.}~\bibnamefont{Jelezko}},
  \bibinfo{author}{\bibfnamefont{T.}~\bibnamefont{Gaebel}},
  \bibinfo{author}{\bibfnamefont{I.}~\bibnamefont{Popa}},
  \bibinfo{author}{\bibfnamefont{A.}~\bibnamefont{Gruber}}, \bibnamefont{and}
  \bibinfo{author}{\bibfnamefont{J.}~\bibnamefont{Wrachtrup}},
  \bibinfo{journal}{Phys. Rev. Lett.} \textbf{\bibinfo{volume}{92}},
  \bibinfo{pages}{76401} (\bibinfo{year}{2004}).

\bibitem[{\citenamefont{Raitzsch et~al.}(2009)\citenamefont{Raitzsch,
  Heidemann, Weimer, Butscher, Kollmann, L{\"{o}}w, B{\"{u}}chler, and
  Pfau}}]{Raitzsch2009}
\bibinfo{author}{\bibfnamefont{U.}~\bibnamefont{Raitzsch}},
  \bibinfo{author}{\bibfnamefont{R.}~\bibnamefont{Heidemann}},
  \bibinfo{author}{\bibfnamefont{H.}~\bibnamefont{Weimer}},
  \bibinfo{author}{\bibfnamefont{B.}~\bibnamefont{Butscher}},
  \bibinfo{author}{\bibfnamefont{P.}~\bibnamefont{Kollmann}},
  \bibinfo{author}{\bibfnamefont{R.}~\bibnamefont{L{\"{o}}w}},
  \bibinfo{author}{\bibfnamefont{H.~P.} \bibnamefont{B{\"{u}}chler}},
  \bibnamefont{and} \bibinfo{author}{\bibfnamefont{T.}~\bibnamefont{Pfau}},
  \bibinfo{journal}{New Journal of Physics} \textbf{\bibinfo{volume}{11}},
  \bibinfo{pages}{55014} (\bibinfo{year}{2009}).

\bibitem[{\citenamefont{Fitzpatrick et~al.}(2017)\citenamefont{Fitzpatrick,
  Sundaresan, Li, Koch, and Houck}}]{Fitzpatrick2017}
\bibinfo{author}{\bibfnamefont{M.}~\bibnamefont{Fitzpatrick}},
  \bibinfo{author}{\bibfnamefont{N.~M.} \bibnamefont{Sundaresan}},
  \bibinfo{author}{\bibfnamefont{A.~C.~Y.} \bibnamefont{Li}},
  \bibinfo{author}{\bibfnamefont{J.}~\bibnamefont{Koch}}, \bibnamefont{and}
  \bibinfo{author}{\bibfnamefont{A.~A.} \bibnamefont{Houck}},
  \bibinfo{journal}{Phys. Rev. X} \textbf{\bibinfo{volume}{7}},
  \bibinfo{pages}{11016} (\bibinfo{year}{2017}).

\bibitem[{\citenamefont{van Handel and Mabuchi}(2005)}]{Handel2005}
\bibinfo{author}{\bibfnamefont{R.}~\bibnamefont{van Handel}} \bibnamefont{and}
  \bibinfo{author}{\bibfnamefont{H.}~\bibnamefont{Mabuchi}},
  \bibinfo{journal}{Journal of Optics B: Quantum and Semiclassical Optics}
  \textbf{\bibinfo{volume}{7}}, \bibinfo{pages}{S226} (\bibinfo{year}{2005}).

\bibitem[{\citenamefont{{Le Bris} and Rouchon}(2013)}]{Bris2013}
\bibinfo{author}{\bibfnamefont{C.}~\bibnamefont{{Le Bris}}} \bibnamefont{and}
  \bibinfo{author}{\bibfnamefont{P.}~\bibnamefont{Rouchon}},
  \bibinfo{journal}{Phys. Rev. A} \textbf{\bibinfo{volume}{87}},
  \bibinfo{pages}{22125} (\bibinfo{year}{2013}), \eprint{1207.4580}.

\bibitem[{\citenamefont{Gross et~al.}(2010)\citenamefont{Gross, Liu, Flammia,
  Becker, and Eisert}}]{Gross2010}
\bibinfo{author}{\bibfnamefont{D.}~\bibnamefont{Gross}},
  \bibinfo{author}{\bibfnamefont{Y.-K.} \bibnamefont{Liu}},
  \bibinfo{author}{\bibfnamefont{S.~T.} \bibnamefont{Flammia}},
  \bibinfo{author}{\bibfnamefont{S.}~\bibnamefont{Becker}}, \bibnamefont{and}
  \bibinfo{author}{\bibfnamefont{J.}~\bibnamefont{Eisert}},
  \bibinfo{journal}{Phys. Rev. Lett.} \textbf{\bibinfo{volume}{105}},
  \bibinfo{pages}{150401} (\bibinfo{year}{2010}).

\bibitem[{\citenamefont{Kyrillidis et~al.}(2018)\citenamefont{Kyrillidis,
  Kalev, Park, Bhojanapalli, Caramanis, and Sanghavi}}]{Kyrillidis2018}
\bibinfo{author}{\bibfnamefont{A.}~\bibnamefont{Kyrillidis}},
  \bibinfo{author}{\bibfnamefont{A.}~\bibnamefont{Kalev}},
  \bibinfo{author}{\bibfnamefont{D.}~\bibnamefont{Park}},
  \bibinfo{author}{\bibfnamefont{S.}~\bibnamefont{Bhojanapalli}},
  \bibinfo{author}{\bibfnamefont{C.}~\bibnamefont{Caramanis}},
  \bibnamefont{and} \bibinfo{author}{\bibfnamefont{S.}~\bibnamefont{Sanghavi}},
  \bibinfo{journal}{npj Quantum Information} \textbf{\bibinfo{volume}{4}},
  \bibinfo{pages}{36} (\bibinfo{year}{2018}), ISSN \bibinfo{issn}{2056-6387}.

\bibitem[{\citenamefont{Vidal}(2004)}]{Vidal2004}
\bibinfo{author}{\bibfnamefont{G.}~\bibnamefont{Vidal}},
  \bibinfo{journal}{Phys. Rev. Lett.} \textbf{\bibinfo{volume}{93}},
  \bibinfo{pages}{40502} (\bibinfo{year}{2004}).

\bibitem[{\citenamefont{White et~al.}(2018)\citenamefont{White, Zaletel, Mong,
  and Refael}}]{White2018}
\bibinfo{author}{\bibfnamefont{C.~D.} \bibnamefont{White}},
  \bibinfo{author}{\bibfnamefont{M.}~\bibnamefont{Zaletel}},
  \bibinfo{author}{\bibfnamefont{R.~S.~K.} \bibnamefont{Mong}},
  \bibnamefont{and} \bibinfo{author}{\bibfnamefont{G.}~\bibnamefont{Refael}},
  \bibinfo{journal}{Phys. Rev. B} \textbf{\bibinfo{volume}{97}},
  \bibinfo{pages}{35127} (\bibinfo{year}{2018}).

\bibitem[{\citenamefont{Motta et~al.}(2018)\citenamefont{Motta, Ye, McClean,
  Li, Minnich, Babbush, and {Kin-Lic Chan}}}]{Motta2018}
\bibinfo{author}{\bibfnamefont{M.}~\bibnamefont{Motta}},
  \bibinfo{author}{\bibfnamefont{E.}~\bibnamefont{Ye}},
  \bibinfo{author}{\bibfnamefont{J.~R.} \bibnamefont{McClean}},
  \bibinfo{author}{\bibfnamefont{Z.}~\bibnamefont{Li}},
  \bibinfo{author}{\bibfnamefont{A.~J.} \bibnamefont{Minnich}},
  \bibinfo{author}{\bibfnamefont{R.}~\bibnamefont{Babbush}}, \bibnamefont{and}
  \bibinfo{author}{\bibfnamefont{G.}~\bibnamefont{{Kin-Lic Chan}}},
  \bibinfo{journal}{arXiv e-prints} p. \bibinfo{pages}{arXiv:1808.02625}
  (\bibinfo{year}{2018}), \eprint{1808.02625}.

\bibitem[{\citenamefont{Cao et~al.}(2010)\citenamefont{Cao, Zhou, Guo, and
  He}}]{Cao2010}
\bibinfo{author}{\bibfnamefont{K.}~\bibnamefont{Cao}},
  \bibinfo{author}{\bibfnamefont{Z.-W.} \bibnamefont{Zhou}},
  \bibinfo{author}{\bibfnamefont{G.-C.} \bibnamefont{Guo}}, \bibnamefont{and}
  \bibinfo{author}{\bibfnamefont{L.}~\bibnamefont{He}}, \bibinfo{journal}{Phys.
  Rev. A} \textbf{\bibinfo{volume}{81}}, \bibinfo{pages}{34302}
  (\bibinfo{year}{2010}).

\bibitem[{\citenamefont{Wu et~al.}(2018{\natexlab{a}})\citenamefont{Wu, Di,
  Cappello, Finkel, Alexeev, and Chong}}]{Wu2018}
\bibinfo{author}{\bibfnamefont{X.-C.} \bibnamefont{Wu}},
  \bibinfo{author}{\bibfnamefont{S.}~\bibnamefont{Di}},
  \bibinfo{author}{\bibfnamefont{F.}~\bibnamefont{Cappello}},
  \bibinfo{author}{\bibfnamefont{H.}~\bibnamefont{Finkel}},
  \bibinfo{author}{\bibfnamefont{Y.}~\bibnamefont{Alexeev}}, \bibnamefont{and}
  \bibinfo{author}{\bibfnamefont{F.~T.} \bibnamefont{Chong}},
  \bibinfo{journal}{arXiv e-prints} p. \bibinfo{pages}{arXiv:1811.05630}
  (\bibinfo{year}{2018}{\natexlab{a}}), \eprint{1811.05630}.

\bibitem[{\citenamefont{Wu et~al.}(2018{\natexlab{b}})\citenamefont{Wu, Di,
  Cappello, Finkel, Alexeev, and Chong}}]{Wu2018-2}
\bibinfo{author}{\bibfnamefont{X.-C.} \bibnamefont{Wu}},
  \bibinfo{author}{\bibfnamefont{S.}~\bibnamefont{Di}},
  \bibinfo{author}{\bibfnamefont{F.}~\bibnamefont{Cappello}},
  \bibinfo{author}{\bibfnamefont{H.}~\bibnamefont{Finkel}},
  \bibinfo{author}{\bibfnamefont{Y.}~\bibnamefont{Alexeev}}, \bibnamefont{and}
  \bibinfo{author}{\bibfnamefont{F.~T.} \bibnamefont{Chong}},
  \bibinfo{journal}{arXiv e-prints} p. \bibinfo{pages}{arXiv:1811.05140}
  (\bibinfo{year}{2018}{\natexlab{b}}), \eprint{1811.05140}.

\bibitem[{\citenamefont{Kraus et~al.}(1983)\citenamefont{Kraus, B{\"{o}}hm,
  Dollard, and Wootters}}]{Kraus1983}
\bibinfo{author}{\bibfnamefont{K.}~\bibnamefont{Kraus}},
  \bibinfo{author}{\bibfnamefont{A.}~\bibnamefont{B{\"{o}}hm}},
  \bibinfo{author}{\bibfnamefont{J.}~\bibnamefont{Dollard}}, \bibnamefont{and}
  \bibinfo{author}{\bibfnamefont{W.}~\bibnamefont{Wootters}},
  \emph{\bibinfo{title}{{States, Effects, and Operations Fundamental Notions of
  Quantum Theory}}}, vol. \bibinfo{volume}{190} (\bibinfo{year}{1983}).

\bibitem[{\citenamefont{Bacon et~al.}(2001)\citenamefont{Bacon, Childs, Chuang,
  Kempe, Leung, and Zhou}}]{Bacon2001}
\bibinfo{author}{\bibfnamefont{D.}~\bibnamefont{Bacon}},
  \bibinfo{author}{\bibfnamefont{A.~M.} \bibnamefont{Childs}},
  \bibinfo{author}{\bibfnamefont{I.~L.} \bibnamefont{Chuang}},
  \bibinfo{author}{\bibfnamefont{J.}~\bibnamefont{Kempe}},
  \bibinfo{author}{\bibfnamefont{D.~W.} \bibnamefont{Leung}}, \bibnamefont{and}
  \bibinfo{author}{\bibfnamefont{X.}~\bibnamefont{Zhou}},
  \bibinfo{journal}{Phys. Rev. A} \textbf{\bibinfo{volume}{64}},
  \bibinfo{pages}{62302} (\bibinfo{year}{2001}).

\bibitem[{\citenamefont{Nielsen and Chuang}(2011)}]{Nielsen2011}
\bibinfo{author}{\bibfnamefont{M.~A.} \bibnamefont{Nielsen}} \bibnamefont{and}
  \bibinfo{author}{\bibfnamefont{I.~L.} \bibnamefont{Chuang}},
  \emph{\bibinfo{title}{{Quantum Computation and Quantum Information: 10th
  Anniversary Edition}}} (\bibinfo{publisher}{Cambridge University Press},
  \bibinfo{address}{New York, NY, USA}, \bibinfo{year}{2011}),
  \bibinfo{edition}{10th} ed., ISBN \bibinfo{isbn}{1107002176, 9781107002173}.

\bibitem[{\citenamefont{Alquier et~al.}(2013)\citenamefont{Alquier, Butucea,
  Hebiri, Meziani, and Morimae}}]{Alquier2013}
\bibinfo{author}{\bibfnamefont{P.}~\bibnamefont{Alquier}},
  \bibinfo{author}{\bibfnamefont{C.}~\bibnamefont{Butucea}},
  \bibinfo{author}{\bibfnamefont{M.}~\bibnamefont{Hebiri}},
  \bibinfo{author}{\bibfnamefont{K.}~\bibnamefont{Meziani}}, \bibnamefont{and}
  \bibinfo{author}{\bibfnamefont{T.}~\bibnamefont{Morimae}},
  \bibinfo{journal}{Phys. Rev. A} \textbf{\bibinfo{volume}{88}},
  \bibinfo{pages}{32113} (\bibinfo{year}{2013}).

\bibitem[{\citenamefont{Butucea et~al.}(2015)\citenamefont{Butucea, Guţă, and
  Kypraios}}]{Butucea2015}
\bibinfo{author}{\bibfnamefont{C.}~\bibnamefont{Butucea}},
  \bibinfo{author}{\bibfnamefont{M.}~\bibnamefont{Guţă}}, \bibnamefont{and}
  \bibinfo{author}{\bibfnamefont{T.}~\bibnamefont{Kypraios}},
  \bibinfo{journal}{New Journal of Physics} \textbf{\bibinfo{volume}{17}},
  \bibinfo{pages}{113050} (\bibinfo{year}{2015}), ISSN
  \bibinfo{issn}{1367-2630}.

\bibitem[{\citenamefont{Trefethen and Bau}(1997)}]{Trefethen1997}
\bibinfo{author}{\bibfnamefont{L.}~\bibnamefont{Trefethen}} \bibnamefont{and}
  \bibinfo{author}{\bibfnamefont{D.}~\bibnamefont{Bau}},
  \emph{\bibinfo{title}{{Numerical linear algebra}}}
  (\bibinfo{publisher}{Philadelphia : SIAM}, \bibinfo{year}{1997}).

\bibitem[{\citenamefont{Crooks}(2015)}]{Crooks2015}
\bibinfo{author}{\bibfnamefont{G.}~\bibnamefont{Crooks}}, in
  \emph{\bibinfo{booktitle}{On Measures of Entropy and Information}}
  (\bibinfo{year}{2015}).

\bibitem[{\citenamefont{Emerson et~al.}(2005)\citenamefont{Emerson, Alicki, and
  {\.{Z}}yczkowski}}]{Emerson2005}
\bibinfo{author}{\bibfnamefont{J.}~\bibnamefont{Emerson}},
  \bibinfo{author}{\bibfnamefont{R.}~\bibnamefont{Alicki}}, \bibnamefont{and}
  \bibinfo{author}{\bibfnamefont{K.}~\bibnamefont{{\.{Z}}yczkowski}},
  \bibinfo{journal}{Journal of Optics B: Quantum and Semiclassical Optics}
  \textbf{\bibinfo{volume}{7}}, \bibinfo{pages}{S347} (\bibinfo{year}{2005}),
  ISSN \bibinfo{issn}{1464-4266}.

\bibitem[{\citenamefont{Knill et~al.}(2008)\citenamefont{Knill, Leibfried,
  Reichle, Britton, Blakestad, Jost, Langer, Ozeri, Seidelin, and
  Wineland}}]{Knill2008}
\bibinfo{author}{\bibfnamefont{E.}~\bibnamefont{Knill}},
  \bibinfo{author}{\bibfnamefont{D.}~\bibnamefont{Leibfried}},
  \bibinfo{author}{\bibfnamefont{R.}~\bibnamefont{Reichle}},
  \bibinfo{author}{\bibfnamefont{J.}~\bibnamefont{Britton}},
  \bibinfo{author}{\bibfnamefont{R.~B.} \bibnamefont{Blakestad}},
  \bibinfo{author}{\bibfnamefont{J.~D.} \bibnamefont{Jost}},
  \bibinfo{author}{\bibfnamefont{C.}~\bibnamefont{Langer}},
  \bibinfo{author}{\bibfnamefont{R.}~\bibnamefont{Ozeri}},
  \bibinfo{author}{\bibfnamefont{S.}~\bibnamefont{Seidelin}}, \bibnamefont{and}
  \bibinfo{author}{\bibfnamefont{D.~J.} \bibnamefont{Wineland}},
  \bibinfo{journal}{Phys. Rev. A} \textbf{\bibinfo{volume}{77}},
  \bibinfo{pages}{12307} (\bibinfo{year}{2008}).

\bibitem[{\citenamefont{Onorati et~al.}(2019)\citenamefont{Onorati, Werner, and
  Eisert}}]{Onorati2019}
\bibinfo{author}{\bibfnamefont{E.}~\bibnamefont{Onorati}},
  \bibinfo{author}{\bibfnamefont{A.~H.} \bibnamefont{Werner}},
  \bibnamefont{and} \bibinfo{author}{\bibfnamefont{J.}~\bibnamefont{Eisert}},
  \bibinfo{journal}{Phys. Rev. Lett.} \textbf{\bibinfo{volume}{123}},
  \bibinfo{pages}{60501} (\bibinfo{year}{2019}).

\bibitem[{\citenamefont{Emerson et~al.}(2002)\citenamefont{Emerson, Weinstein,
  Lloyd, and Cory}}]{Emerson2002}
\bibinfo{author}{\bibfnamefont{J.}~\bibnamefont{Emerson}},
  \bibinfo{author}{\bibfnamefont{Y.~S.} \bibnamefont{Weinstein}},
  \bibinfo{author}{\bibfnamefont{S.}~\bibnamefont{Lloyd}}, \bibnamefont{and}
  \bibinfo{author}{\bibfnamefont{D.~G.} \bibnamefont{Cory}},
  \bibinfo{journal}{Phys. Rev. Lett.} \textbf{\bibinfo{volume}{89}},
  \bibinfo{pages}{284102} (\bibinfo{year}{2002}).

\bibitem[{\citenamefont{Weinstein et~al.}(2004)\citenamefont{Weinstein, Havel,
  Emerson, Boulant, Saraceno, Lloyd, and Cory}}]{Weinstein2004}
\bibinfo{author}{\bibfnamefont{Y.~S.} \bibnamefont{Weinstein}},
  \bibinfo{author}{\bibfnamefont{T.~F.} \bibnamefont{Havel}},
  \bibinfo{author}{\bibfnamefont{J.}~\bibnamefont{Emerson}},
  \bibinfo{author}{\bibfnamefont{N.}~\bibnamefont{Boulant}},
  \bibinfo{author}{\bibfnamefont{M.}~\bibnamefont{Saraceno}},
  \bibinfo{author}{\bibfnamefont{S.}~\bibnamefont{Lloyd}}, \bibnamefont{and}
  \bibinfo{author}{\bibfnamefont{D.~G.} \bibnamefont{Cory}},
  \bibinfo{journal}{The Journal of Chemical Physics}
  \textbf{\bibinfo{volume}{121}}, \bibinfo{pages}{6117} (\bibinfo{year}{2004}),
  ISSN \bibinfo{issn}{0021-9606}.

\bibitem[{\citenamefont{Abrams and Lloyd}(1997)}]{Abrams1997}
\bibinfo{author}{\bibfnamefont{D.~S.} \bibnamefont{Abrams}} \bibnamefont{and}
  \bibinfo{author}{\bibfnamefont{S.}~\bibnamefont{Lloyd}},
  \bibinfo{journal}{Phys. Rev. Lett.} \textbf{\bibinfo{volume}{79}},
  \bibinfo{pages}{2586} (\bibinfo{year}{1997}).

\bibitem[{\citenamefont{Peruzzo et~al.}(2014)\citenamefont{Peruzzo, McClean,
  Shadbolt, Yung, Zhou, Love, Aspuru-Guzik, and O'Brien}}]{Peruzzo2014}
\bibinfo{author}{\bibfnamefont{A.}~\bibnamefont{Peruzzo}},
  \bibinfo{author}{\bibfnamefont{J.}~\bibnamefont{McClean}},
  \bibinfo{author}{\bibfnamefont{P.}~\bibnamefont{Shadbolt}},
  \bibinfo{author}{\bibfnamefont{M.-H.} \bibnamefont{Yung}},
  \bibinfo{author}{\bibfnamefont{X.-Q.} \bibnamefont{Zhou}},
  \bibinfo{author}{\bibfnamefont{P.~J.} \bibnamefont{Love}},
  \bibinfo{author}{\bibfnamefont{A.}~\bibnamefont{Aspuru-Guzik}},
  \bibnamefont{and} \bibinfo{author}{\bibfnamefont{J.~L.}
  \bibnamefont{O'Brien}}, \bibinfo{journal}{Nature Communications}
  \textbf{\bibinfo{volume}{5}}, \bibinfo{pages}{4213} (\bibinfo{year}{2014}),
  ISSN \bibinfo{issn}{2041-1723}.

\bibitem[{\citenamefont{Farhi et~al.}(2014)\citenamefont{Farhi, Goldstone, and
  Gutmann}}]{Farhi2014}
\bibinfo{author}{\bibfnamefont{E.}~\bibnamefont{Farhi}},
  \bibinfo{author}{\bibfnamefont{J.}~\bibnamefont{Goldstone}},
  \bibnamefont{and} \bibinfo{author}{\bibfnamefont{S.}~\bibnamefont{Gutmann}},
  \bibinfo{journal}{arXiv e-prints} p. \bibinfo{pages}{arXiv:1411.4028}
  (\bibinfo{year}{2014}).

\bibitem[{\citenamefont{Diker}(2016)}]{Diker2016}
\bibinfo{author}{\bibfnamefont{F.}~\bibnamefont{Diker}},
  \bibinfo{journal}{arXiv e-prints} p. \bibinfo{pages}{arXiv:1606.09290}
  (\bibinfo{year}{2016}), \eprint{1606.09290}.

\bibitem[{\citenamefont{B{\"{a}}rtschi and Eidenbenz}(2019)}]{Bartschi2019}
\bibinfo{author}{\bibfnamefont{A.}~\bibnamefont{B{\"{a}}rtschi}}
  \bibnamefont{and}
  \bibinfo{author}{\bibfnamefont{S.}~\bibnamefont{Eidenbenz}}, in
  \emph{\bibinfo{booktitle}{22nd International Symposium on Fundamentals of
  Computation Theory}}, edited by \bibinfo{editor}{\bibfnamefont{L.~A.}
  \bibnamefont{G{\c{a}}sieniec}},
  \bibinfo{editor}{\bibfnamefont{J.}~\bibnamefont{Jansson}}, \bibnamefont{and}
  \bibinfo{editor}{\bibfnamefont{C.}~\bibnamefont{Levcopoulos}}
  (\bibinfo{publisher}{Springer International Publishing},
  \bibinfo{address}{Cham}, \bibinfo{year}{2019}), pp.
  \bibinfo{pages}{126--139}, ISBN \bibinfo{isbn}{978-3-030-25027-0}.

\bibitem[{\citenamefont{Murao et~al.}(1999)\citenamefont{Murao, Jonathan,
  Plenio, and Vedral}}]{Murao1999}
\bibinfo{author}{\bibfnamefont{M.}~\bibnamefont{Murao}},
  \bibinfo{author}{\bibfnamefont{D.}~\bibnamefont{Jonathan}},
  \bibinfo{author}{\bibfnamefont{M.~B.} \bibnamefont{Plenio}},
  \bibnamefont{and} \bibinfo{author}{\bibfnamefont{V.}~\bibnamefont{Vedral}},
  \bibinfo{journal}{Phys. Rev. A} \textbf{\bibinfo{volume}{59}},
  \bibinfo{pages}{156} (\bibinfo{year}{1999}).

\bibitem[{\citenamefont{Prevedel et~al.}(2009)\citenamefont{Prevedel,
  Cronenberg, Tame, Paternostro, Walther, Kim, and Zeilinger}}]{Prevedel2009}
\bibinfo{author}{\bibfnamefont{R.}~\bibnamefont{Prevedel}},
  \bibinfo{author}{\bibfnamefont{G.}~\bibnamefont{Cronenberg}},
  \bibinfo{author}{\bibfnamefont{M.~S.} \bibnamefont{Tame}},
  \bibinfo{author}{\bibfnamefont{M.}~\bibnamefont{Paternostro}},
  \bibinfo{author}{\bibfnamefont{P.}~\bibnamefont{Walther}},
  \bibinfo{author}{\bibfnamefont{M.~S.} \bibnamefont{Kim}}, \bibnamefont{and}
  \bibinfo{author}{\bibfnamefont{A.}~\bibnamefont{Zeilinger}},
  \bibinfo{journal}{Phys. Rev. Lett.} \textbf{\bibinfo{volume}{103}},
  \bibinfo{pages}{20503} (\bibinfo{year}{2009}).

\bibitem[{\citenamefont{Pezz{\`{e}} et~al.}(2018)\citenamefont{Pezz{\`{e}},
  Smerzi, Oberthaler, Schmied, and Treutlein}}]{Pezze2018}
\bibinfo{author}{\bibfnamefont{L.}~\bibnamefont{Pezz{\`{e}}}},
  \bibinfo{author}{\bibfnamefont{A.}~\bibnamefont{Smerzi}},
  \bibinfo{author}{\bibfnamefont{M.~K.} \bibnamefont{Oberthaler}},
  \bibinfo{author}{\bibfnamefont{R.}~\bibnamefont{Schmied}}, \bibnamefont{and}
  \bibinfo{author}{\bibfnamefont{P.}~\bibnamefont{Treutlein}},
  \bibinfo{journal}{Rev. Mod. Phys.} \textbf{\bibinfo{volume}{90}},
  \bibinfo{pages}{35005} (\bibinfo{year}{2018}).

\bibitem[{\citenamefont{Parrish et~al.}(2019)\citenamefont{Parrish, Hohenstein,
  McMahon, and Martinez}}]{Parrish2019}
\bibinfo{author}{\bibfnamefont{R.~M.} \bibnamefont{Parrish}},
  \bibinfo{author}{\bibfnamefont{E.~G.} \bibnamefont{Hohenstein}},
  \bibinfo{author}{\bibfnamefont{P.~L.} \bibnamefont{McMahon}},
  \bibnamefont{and} \bibinfo{author}{\bibfnamefont{T.~J.}
  \bibnamefont{Martinez}}, \bibinfo{journal}{Phys. Rev. Lett.}
  \textbf{\bibinfo{volume}{122}}, \bibinfo{pages}{230401}
  (\bibinfo{year}{2019}).

\bibitem[{\citenamefont{Hadfield et~al.}(2019)\citenamefont{Hadfield, Wang,
  O'Gorman, Rieffel, Venturelli, and Biswas}}]{Hadfield2019}
\bibinfo{author}{\bibfnamefont{S.}~\bibnamefont{Hadfield}},
  \bibinfo{author}{\bibfnamefont{Z.}~\bibnamefont{Wang}},
  \bibinfo{author}{\bibfnamefont{B.}~\bibnamefont{O'Gorman}},
  \bibinfo{author}{\bibfnamefont{E.~G.} \bibnamefont{Rieffel}},
  \bibinfo{author}{\bibfnamefont{D.}~\bibnamefont{Venturelli}},
  \bibnamefont{and} \bibinfo{author}{\bibfnamefont{R.}~\bibnamefont{Biswas}},
  \bibinfo{journal}{Algorithms} \textbf{\bibinfo{volume}{12}}
  (\bibinfo{year}{2019}), ISSN \bibinfo{issn}{1999-4893}.

\bibitem[{\citenamefont{Grover}(1996)}]{Grover1996}
\bibinfo{author}{\bibfnamefont{L.~K.} \bibnamefont{Grover}}, in
  \emph{\bibinfo{booktitle}{Proceedings of the Twenty-eighth Annual ACM
  Symposium on Theory of Computing}} (\bibinfo{publisher}{ACM},
  \bibinfo{address}{New York, NY, USA}, \bibinfo{year}{1996}), STOC '96, pp.
  \bibinfo{pages}{212--219}, ISBN \bibinfo{isbn}{0-89791-785-5}.

\bibitem[{\citenamefont{Grover}(1998)}]{Grover1998}
\bibinfo{author}{\bibfnamefont{L.~K.} \bibnamefont{Grover}}, in
  \emph{\bibinfo{booktitle}{Proceedings of the Thirtieth Annual ACM Symposium
  on Theory of Computing}} (\bibinfo{publisher}{ACM}, \bibinfo{address}{New
  York, NY, USA}, \bibinfo{year}{1998}), STOC '98, pp. \bibinfo{pages}{53--62},
  ISBN \bibinfo{isbn}{0-89791-962-9}.

\bibitem[{\citenamefont{Tulsi}(2008)}]{Tulsi2008}
\bibinfo{author}{\bibfnamefont{A.}~\bibnamefont{Tulsi}},
  \bibinfo{journal}{Phys. Rev. A} \textbf{\bibinfo{volume}{78}},
  \bibinfo{pages}{22332} (\bibinfo{year}{2008}).

\bibitem[{\citenamefont{Brassard et~al.}(1998)\citenamefont{Brassard, H{\O}yer,
  and Tapp}}]{Brassard1998}
\bibinfo{author}{\bibfnamefont{G.}~\bibnamefont{Brassard}},
  \bibinfo{author}{\bibfnamefont{P.}~\bibnamefont{H{\O}yer}}, \bibnamefont{and}
  \bibinfo{author}{\bibfnamefont{A.}~\bibnamefont{Tapp}}, in
  \emph{\bibinfo{booktitle}{International Colloquium on Automata, Languages,
  and Programming}}, edited by \bibinfo{editor}{\bibfnamefont{K.~G.}
  \bibnamefont{Larsen}},
  \bibinfo{editor}{\bibfnamefont{S.}~\bibnamefont{Skyum}}, \bibnamefont{and}
  \bibinfo{editor}{\bibfnamefont{G.}~\bibnamefont{Winskel}}
  (\bibinfo{publisher}{Springer Berlin Heidelberg}, \bibinfo{address}{Berlin,
  Heidelberg}, \bibinfo{year}{1998}), pp. \bibinfo{pages}{820--831}, ISBN
  \bibinfo{isbn}{978-3-540-68681-1}.

\bibitem[{\citenamefont{Zhou et~al.}(2020)\citenamefont{Zhou, {Miles
  Stoudenmire}, and Waintal}}]{Zhou2020}
\bibinfo{author}{\bibfnamefont{Y.}~\bibnamefont{Zhou}},
  \bibinfo{author}{\bibfnamefont{E.}~\bibnamefont{{Miles Stoudenmire}}},
  \bibnamefont{and} \bibinfo{author}{\bibfnamefont{X.}~\bibnamefont{Waintal}},
  \bibinfo{journal}{arXiv e-prints} p. \bibinfo{pages}{arXiv:2002.07730}
  (\bibinfo{year}{2020}), \eprint{2002.07730}.

\bibitem[{\citenamefont{Noh et~al.}(2020)\citenamefont{Noh, Jiang, and
  Fefferman}}]{Noh2020}
\bibinfo{author}{\bibfnamefont{K.}~\bibnamefont{Noh}},
  \bibinfo{author}{\bibfnamefont{L.}~\bibnamefont{Jiang}}, \bibnamefont{and}
  \bibinfo{author}{\bibfnamefont{B.}~\bibnamefont{Fefferman}},
  \bibinfo{journal}{arXiv e-prints} p. \bibinfo{pages}{arXiv:2003.13163}
  (\bibinfo{year}{2020}), \eprint{2003.13163}.

\bibitem[{\citenamefont{Kim and Ruskai}(2014)}]{Kim2014}
\bibinfo{author}{\bibfnamefont{I.}~\bibnamefont{Kim}} \bibnamefont{and}
  \bibinfo{author}{\bibfnamefont{M.~B.} \bibnamefont{Ruskai}},
  \bibinfo{journal}{Journal of Mathematical Physics}
  \textbf{\bibinfo{volume}{55}}, \bibinfo{pages}{92201} (\bibinfo{year}{2014}),
  ISSN \bibinfo{issn}{0022-2488}.

\bibitem[{\citenamefont{Winter}(2016)}]{Winter2016}
\bibinfo{author}{\bibfnamefont{A.}~\bibnamefont{Winter}},
  \bibinfo{journal}{Communications in Mathematical Physics}
  \textbf{\bibinfo{volume}{347}}, \bibinfo{pages}{291} (\bibinfo{year}{2016}),
  ISSN \bibinfo{issn}{1432-0916}.

\bibitem[{\citenamefont{Yang et~al.}(2019)\citenamefont{Yang, Coppersmith, and
  Friesen}}]{Yang2019}
\bibinfo{author}{\bibfnamefont{Y.-C.} \bibnamefont{Yang}},
  \bibinfo{author}{\bibfnamefont{S.~N.} \bibnamefont{Coppersmith}},
  \bibnamefont{and} \bibinfo{author}{\bibfnamefont{M.}~\bibnamefont{Friesen}},
  \bibinfo{journal}{npj Quantum Information} \textbf{\bibinfo{volume}{5}},
  \bibinfo{pages}{12} (\bibinfo{year}{2019}), ISSN \bibinfo{issn}{2056-6387}.

\bibitem[{\citenamefont{Burrell et~al.}(2010)\citenamefont{Burrell, Szwer,
  Webster, and Lucas}}]{Burrell2010}
\bibinfo{author}{\bibfnamefont{A.~H.} \bibnamefont{Burrell}},
  \bibinfo{author}{\bibfnamefont{D.~J.} \bibnamefont{Szwer}},
  \bibinfo{author}{\bibfnamefont{S.~C.} \bibnamefont{Webster}},
  \bibnamefont{and} \bibinfo{author}{\bibfnamefont{D.~M.} \bibnamefont{Lucas}},
  \bibinfo{journal}{Phys. Rev. A} \textbf{\bibinfo{volume}{81}},
  \bibinfo{pages}{40302} (\bibinfo{year}{2010}).

\bibitem[{\citenamefont{Zahedinejad et~al.}(2015)\citenamefont{Zahedinejad,
  Ghosh, and Sanders}}]{Zahedinejad}
\bibinfo{author}{\bibfnamefont{E.}~\bibnamefont{Zahedinejad}},
  \bibinfo{author}{\bibfnamefont{J.}~\bibnamefont{Ghosh}}, \bibnamefont{and}
  \bibinfo{author}{\bibfnamefont{B.~C.} \bibnamefont{Sanders}},
  \bibinfo{journal}{Phys. Rev. Lett.} \textbf{\bibinfo{volume}{114}},
  \bibinfo{pages}{200502} (\bibinfo{year}{2015}).

\bibitem[{\citenamefont{Yoneda et~al.}(2018)\citenamefont{Yoneda, Takeda,
  Otsuka, Nakajima, Delbecq, Allison, Honda, Kodera, Oda, Hoshi
  et~al.}}]{Yoneda2018}
\bibinfo{author}{\bibfnamefont{J.}~\bibnamefont{Yoneda}},
  \bibinfo{author}{\bibfnamefont{K.}~\bibnamefont{Takeda}},
  \bibinfo{author}{\bibfnamefont{T.}~\bibnamefont{Otsuka}},
  \bibinfo{author}{\bibfnamefont{T.}~\bibnamefont{Nakajima}},
  \bibinfo{author}{\bibfnamefont{M.~R.} \bibnamefont{Delbecq}},
  \bibinfo{author}{\bibfnamefont{G.}~\bibnamefont{Allison}},
  \bibinfo{author}{\bibfnamefont{T.}~\bibnamefont{Honda}},
  \bibinfo{author}{\bibfnamefont{T.}~\bibnamefont{Kodera}},
  \bibinfo{author}{\bibfnamefont{S.}~\bibnamefont{Oda}},
  \bibinfo{author}{\bibfnamefont{Y.}~\bibnamefont{Hoshi}},
  \bibnamefont{et~al.}, \bibinfo{journal}{Nature Nanotechnology}
  \textbf{\bibinfo{volume}{13}}, \bibinfo{pages}{102} (\bibinfo{year}{2018}),
  ISSN \bibinfo{issn}{1748-3395}.

\bibitem[{\citenamefont{Harvey-Collard
  et~al.}(2018)\citenamefont{Harvey-Collard, D'Anjou, Rudolph, Jacobson,
  Dominguez, {Ten Eyck}, Wendt, Pluym, Lilly, Coish et~al.}}]{Harvey2018}
\bibinfo{author}{\bibfnamefont{P.}~\bibnamefont{Harvey-Collard}},
  \bibinfo{author}{\bibfnamefont{B.}~\bibnamefont{D'Anjou}},
  \bibinfo{author}{\bibfnamefont{M.}~\bibnamefont{Rudolph}},
  \bibinfo{author}{\bibfnamefont{N.~T.} \bibnamefont{Jacobson}},
  \bibinfo{author}{\bibfnamefont{J.}~\bibnamefont{Dominguez}},
  \bibinfo{author}{\bibfnamefont{G.~A.} \bibnamefont{{Ten Eyck}}},
  \bibinfo{author}{\bibfnamefont{J.~R.} \bibnamefont{Wendt}},
  \bibinfo{author}{\bibfnamefont{T.}~\bibnamefont{Pluym}},
  \bibinfo{author}{\bibfnamefont{M.~P.} \bibnamefont{Lilly}},
  \bibinfo{author}{\bibfnamefont{W.~A.} \bibnamefont{Coish}},
  \bibnamefont{et~al.}, \bibinfo{journal}{Phys. Rev. X}
  \textbf{\bibinfo{volume}{8}}, \bibinfo{pages}{21046} (\bibinfo{year}{2018}).

\bibitem[{\citenamefont{Reed et~al.}(2010)\citenamefont{Reed, Johnson, Houck,
  DiCarlo, Chow, Schuster, Frunzio, and Schoelkopf}}]{Reed2010}
\bibinfo{author}{\bibfnamefont{M.~D.} \bibnamefont{Reed}},
  \bibinfo{author}{\bibfnamefont{B.~R.} \bibnamefont{Johnson}},
  \bibinfo{author}{\bibfnamefont{A.~A.} \bibnamefont{Houck}},
  \bibinfo{author}{\bibfnamefont{L.}~\bibnamefont{DiCarlo}},
  \bibinfo{author}{\bibfnamefont{J.~M.} \bibnamefont{Chow}},
  \bibinfo{author}{\bibfnamefont{D.~I.} \bibnamefont{Schuster}},
  \bibinfo{author}{\bibfnamefont{L.}~\bibnamefont{Frunzio}}, \bibnamefont{and}
  \bibinfo{author}{\bibfnamefont{R.~J.} \bibnamefont{Schoelkopf}},
  \bibinfo{journal}{Applied Physics Letters} \textbf{\bibinfo{volume}{96}},
  \bibinfo{pages}{203110} (\bibinfo{year}{2010}), ISSN
  \bibinfo{issn}{0003-6951}.

\end{thebibliography}

\end{document}